\pgfplotsset{compat=1.5}
\newtheorem{theorem}{Theorem}[section]
\newtheorem{lemma}[theorem]{Lemma}
\newtheorem{definition}[theorem]{Definition}
\newenvironment{proofof}[1]{\begin{trivlist} \item {\bf Proof
#1:~~}}
  {\qed\end{trivlist}}
\newcommand{\namedref}[2]{\hyperref[#2]{#1~\ref*{#2}}}
\newcommand{\thmlab}[1]{\label{thm:#1}}
\newcommand{\thmref}[1]{\namedref{Theorem}{thm:#1}}
\newcommand{\lemlab}[1]{\label{lem:#1}}
\newcommand{\lemref}[1]{\namedref{Lemma}{lem:#1}}
\newcommand{\seclab}[1]{\label{sec:#1}}
\newcommand{\secref}[1]{\namedref{Section}{sec:#1}}
\newcommand{\figlab}[1]{\label{fig:#1}}
\newcommand{\figref}[1]{\namedref{Figure}{fig:#1}}
\newcommand{\alglab}[1]{\label{alg:#1}}
\renewcommand{\algref}[1]{\namedref{Algorithm}{alg:#1}}
\newcommand{\tablelab}[1]{\label{tab:#1}}
\newcommand{\tableref}[1]{\namedref{Table}{tab:#1}}
\newcommand{\deflab}[1]{\label{def:#1}}
\newcommand{\eqnlab}[1]{\label{eq:#1}}
\newcommand{\eqnref}[1]{\namedref{Equation}{eq:#1}}
\def \OPT    {\mdef{\mathsf{OPT}}}
\def \Meyerson    {\mdef{\textsc{Meyerson}}}
\def \GroupSample    {\mdef{\textsc{GroupSample}}}
\def \SensitivitySample    {\mdef{\textsc{SensitivitySample}}}
\def \MultMeyerson    {\mdef{\textsc{MultMeyerson}}}
\def \coreset    {\mdef{\textsc{Coreset}}}
\def \RingSample    {\mdef{\textsc{RingSample}}}
\newcommand{\PPr}[1]{\ensuremath{\mathbf{Pr}\left[#1\right]}}
\newcommand{\Ex}[1]{\ensuremath{\mathbb{E}\left[#1\right]}}
\renewcommand{\O}[1]{\ensuremath{O\left(#1\right)}}
\newcommand{\tO}[1]{\ensuremath{\tilde{O}\left(#1\right)}}
\newcommand{\eps}{\varepsilon}
\def \calA    {\mdef{\mathcal{A}}}
\def \calS    {\mdef{\mathcal{S}}}
\newcommand{\mdef}[1]{{\ensuremath{#1}}\xspace}  
\DeclareMathOperator*{\argmin}{argmin}
\DeclareMathOperator*{\polylog}{polylog}
\DeclareMathOperator*{\poly}{poly}
\DeclareMathOperator*{\Cost}{Cost}
\DeclareMathOperator*{\dist}{dist}
\newcommand{\ignore}[1]{}
\newif\ifnotes\notestrue 
\newcommand{\samson}[1]{\textcolor{purple}{{\bf (Samson:} {#1}{\bf ) }} \marginpar{\tiny\bf
             \begin{minipage}[t]{0.5in}
               \raggedright S:
            \end{minipage}}}            							
\newcommand{\samson}[1]{}
\renewcommand*{\@fnsymbol}[1]{\textcolor{mahogany}{\ensuremath{\ifcase#1\or *\or \dagger\or \ddagger\or
 \mathsection\or \triangledown\or \mathparagraph\or \|\or **\or \dagger\dagger
   \or \ddagger\ddagger \else\@ctrerr\fi}}}
\providecommand{\email}[1]{\href{mailto:#1}{\nolinkurl{#1}\xspace}}
\definecolor{mahogany}{rgb}{0.75, 0.25, 0.0}
\definecolor{ceruleanblue}{rgb}{0.16, 0.32, 0.75}
\definecolor{darkmidnightblue}{rgb}{0.0, 0.2, 0.4}
\definecolor{darkpastelgreen}{rgb}{0.01, 0.75, 0.24}
\definecolor{bleudefrance}{rgb}{0.19, 0.55, 0.91}
\begin{document}
\allowdisplaybreaks

\title{Near-Optimal $k$-Clustering in the Sliding Window Model}
\date{\today}
\author{David P. Woodruff\thanks{Carnegie Mellon University. E-mail: \email{dwoodruf@andrew.cmu.edu}. Partially supported by a Simons Investigator Award and by the National Science Foundation under Grant No. CCF-1815840.}
\and
Peilin Zhong\thanks{Google Research. E-mail: \email{peilinz@google.com}.}
\and
Samson Zhou\thanks{Texas A\&M University. E-mail: \email{samsonzhou@gmail.com}. Work done in part while at Carnegie Mellon University, UC Berkeley, and Rice University. Partially supported by a Simons Investigator Award and by the National Science Foundation under Grant No. CCF-1815840.}
}

\maketitle

\begin{abstract}
Clustering is an important technique for identifying structural information in large-scale data analysis, where the underlying dataset may be too large to store. In many applications, recent data can provide more accurate information and thus older data past a certain time is expired. The sliding window model captures these desired properties and thus there has been substantial interest in clustering in the sliding window model. 

In this paper, we give the first algorithm that achieves near-optimal $(1+\varepsilon)$-approximation to $(k,z)$-clustering in the sliding window model, where $z$ is the exponent of the distance function in the cost. Our algorithm uses $\frac{k}{\min(\varepsilon^4,\varepsilon^{2+z})}\,\text{polylog}\frac{n\Delta}{\varepsilon}$ words of space when the points are from $[\Delta]^d$, thus significantly improving on works by Braverman et. al. (SODA 2016), Borassi et. al. (NeurIPS 2021), and Epasto et. al. (SODA 2022).

Along the way, we develop a data structure for clustering called an online coreset, which outputs a coreset not only for the end of a stream, but also for all prefixes of the stream. Our online coreset samples $\frac{k}{\min(\varepsilon^4,\varepsilon^{2+z})}\,\text{polylog}\frac{n\Delta}{\varepsilon}$ points from the stream. We then show that any online coreset requires $\Omega\left(\frac{k}{\varepsilon^2}\log n\right)$ samples, which shows a separation from the problem of constructing an offline coreset, i.e., constructing online coresets is strictly harder. Our results also extend to general metrics on $[\Delta]^d$ and are near-optimal in light of a $\Omega\left(\frac{k}{\varepsilon^{2+z}}\right)$ lower bound for the size of an offline coreset. 
\end{abstract}

\section{Introduction}
Clustering is a fundamental procedure frequently used to help extract important structural information from large datasets. 
Informally, the goal of clustering is to partition the data into $k$ clusters so that the elements within each cluster have similar properties. 
Classic formulations of clustering include the $k$-median and $k$-means problems, which have been studied since the 1950's~\cite{steinhaus1956division,macqueen1967classification}. 
More generally, for a set $X$ of $n$ points in $\mathbb{R}^d$, along with a metric $\dist$, a cluster parameter $k>0$, and an exponent $z>0$ that is a positive integer, the clustering objective can be defined by
\[\min_{C\subset\mathbb{R}^d, |C|=k}\sum_{i=1}^n\min_{c\in C}\dist(x_i,c)^z.\]
When $\dist$ is the Euclidean distance, the problem is known as $(k,z)$-clustering and more specifically, $k$-median clustering and $k$-means clustering, when $z$ is additionally set to $1$ and $2$, respectively. 

As modern datasets have significantly increased in size, attention has shifted to large-scale computational models, such as the streaming model of computation, that do not require multiple passes over the data. 
In the (insertion-only) streaming model, the points $x_1,\ldots,x_n$ of $X$ arrive sequentially, and the goal is to output an optimal or near-optimal clustering of $X$ while using space sublinear in $n$, ideally space $k\,\polylog(n,d)$, since outputting the cluster centers uses $k$ words of space, where each word of space is assumed to be able to store an entire input point in $\mathbb{R}^d$.  
There exist slight variants of the insertion-only streaming model and a long line of active research has been conducted on clustering in these models~\cite{GuhaMMO00,CharikarOP03,Har-PeledM04,Har-PeledK07,Chen09,BravermanMORST11,FeldmanL11,FeldmanS12,AckermannMRSLS12,BravermanFLSY17,SongYZ18,HuangV20,BravermanHMSSZ21,Cohen-AddadSS21,BravermanFLSZ21,Cohen-AddadLSS22,Tukan0ZBF22,Cohen-AddadWZ23}. 

\paragraph{The sliding window model.} 
Unfortunately, an important shortcoming of the streaming model is that it ignores the time at which a specific data point arrives and thus it is unable to prioritize recent data over older data. 
Consequently, the streaming model cannot capture applications in which recent data is more accurate and therefore considered more important than data that arrived prior to a certain time, e.g., Census data or financial markets. 
Indeed, it has been shown that for a number of applications, the streaming model has inferior performance~\cite{BabcockBDMW02,MankuM12,PapapetrouGD15,WeiLLSDW16} compared to the sliding window model~\cite{DatarGIM02}, where only the most recent $W$ updates in the stream comprise the underlying dataset. 
Here, $W>0$ is a parameter that designates the window size of the active data, so that all updates before the $W$ most recent updates are considered expired, and the goal is to aggregate statistics about the active data using space sublinear in $W$. 
In the setting of clustering, where the data stream is $x_1,\ldots,x_n\subset\mathbb{R}^d$, the active data set is $X=\{x_{n-W+1},\ldots,x_n\}$ for $n\ge W$ and $X=\{x_1,\ldots,x_n\}$ otherwise. 
Thus the sliding window model is a generalization of the streaming model, depending on the choice of $W$, and is especially relevant for time-sensitive settings, such as data summarization~\cite{ChenNZ16,EpastoLVZ17}, event detection in social media~\cite{OsborneMMLSCIMOHJCO14}, and network monitoring~\cite{CormodeM05,CormodeG08,Cormode13}. 

The sliding window model is especially relevant for applications in which computation \emph{must} be restricted to data that arrived after a certain time. 
Data privacy laws such as the General Data Protection Regulation (GDPR) mandate that companies cannot retain specific user data beyond a certain duration. 
For example, the Facebook data policy~\cite{FB-data} states that user search histories are retained for $6$ months, the Apple differential privacy overview~\cite{apple-data} states that collected user information is retained for $3$ months, and the Google data retention policy states that browser information may be stored for up to $9$ months~\cite{google-data}. 
These retention polices can be modeled by the sliding window model with the corresponding setting of the window parameter $W$ and thus the sliding window model has been subsequently studied in a wide range of applications~\cite{LeeT06a,LeeT06b,BravermanO07,BravermanOZ12,BravermanGO14,BravermanGLWZ18,BravermanDMMUWZ20,BravermanWZ21,WoodruffZ21,AjtaiBJSSWZ22,JayaramWZ22,BlockiLMZ23}. 

\paragraph{Clustering in the sliding window model.} 
Because the clustering objective is not well-suited to popular frameworks such as the exponential histogram or the smooth histogram, there has been significant interest in clustering in the sliding window model. 
We now describe the landscape of clustering algorithms in the sliding window model; these results are summarized in \tableref{table:summary}. 
In 2003, \cite{BabcockDMO03} first gave a $2^{\O{1/\eps}}$-approximation algorithm for $k$-median clustering in the sliding window model using $\O{\frac{k}{\eps^4}W^{2\eps}\log^2 W}$ words of space, where $\eps\in\left(0,\frac{1}{2}\right)$ is an input parameter. 
Subsequently, \cite{BravermanLLM15} gave an $\O{1}$-approximate bicriteria algorithm using $2k$ centers and $k^2\polylog(W)$ space for the $k$-median problem in the sliding window model. 
The question of whether there exists a $\poly(k\log W)$ space algorithm for $k$-clustering on sliding windows remained open until \cite{BravermanLLM16} gave constant-factor approximation sliding window algorithms for $k$-median and $k$-means using $\O{k^3\log^6 W}$ space and \cite{Cohen-AddadSS16} gave constant-factor approximation algorithms for $k$-center clustering using $\O{k\log\Delta}$ space, where $\Delta$ is the aspect ratio, i.e., the ratio of the largest to smallest distances between any pair of points. 
Afterwards, \cite{BorassiELVZ20} gave a $C$-approximation algorithm for some constant $C>2^{14}$, though it should be noted that their main contribution was the first constant-factor approximation algorithm for $k$-clustering using space linear in $k$, i.e., $k\,\polylog(W,\Delta)$ space, and thus they did not attempt to optimize the constant $C$. 
Recently, \cite{EpastoMMZ22} gave the first $(1+\eps)$-approximation algorithm for $(k,z)$-clustering using $\frac{(kd+d^C)}{\eps^3}\,\polylog\left(W,\Delta,\frac{1}{\eps}\right)$ words of space, for some constant $C\ge 7$. 
Using known dimensionality reduction techniques, i.e., \cite{MakarychevMR19}, the algorithm's dependence on $d^C$ can be removed in exchange for a $\frac{1}{\eps^{14}}\,\polylog\left(W,\frac{1}{\eps}\right)$ overhead. 
However, neither the $d^C$ dependency nor the $\frac{1}{\eps^{14}}\,\polylog\left(W,\frac{1}{\eps}\right)$ trade-off is desirable for realistic settings of $d$ and $\eps$ for applications of $k$-clustering on sliding windows. 
In particular, recent results have achieved efficient summarizations, i.e., coresets, for $k$-median and $k$-means clustering in the offline setting using $\tO{\frac{k}{\eps^4}\log n}$ words of space~\cite{Cohen-AddadSS21,Cohen-AddadLSS22} when the input is from $[\Delta]^d$ and it is known that this is near-optimal, i.e., $\Omega\left(\frac{k}{\eps^{2+z}}\log n\right)$ samples are necessary to form coresets for $(k,z)$-clustering~\cite{HuangLW22} in that setting. 
Thus a natural question is to ask whether such near-optimal space bounds can be achieved in the sliding window model. 

\subsection{Our Contributions}
In this paper, we answer the question in the affirmative. 
That is, we give near-optimal space algorithms for $k$-median and $k$-means clustering in the sliding window model. 
In fact, we give more general algorithms for $(k,z)$-clustering in the sliding window that nearly match the space used by the offline coreset constructions of~\cite{Cohen-AddadSS21,Cohen-AddadLSS22,Cohen-AddadLSSS22}:
\begin{restatable}{theorem}{thmclustersw}
\thmlab{thm:cluster:sw}
There exists an algorithm that samples $\frac{k}{\min(\eps^4,\eps^{2+z})}\,\polylog\frac{n\Delta}{\eps}$ points and with high probability, outputs a $(1+\eps)$-approximation to $(k,z)$-clustering for the Euclidean distance on $[\Delta]^d$ in the sliding window model. 
\end{restatable}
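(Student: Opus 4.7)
The plan is to reduce sliding-window $(k,z)$-clustering to the online coreset (developed separately in the paper) combined with a scheduling framework for tracking the moving window. I would use the online coreset as a black box: it samples $\frac{k}{\min(\eps^4,\eps^{2+z})}\,\polylog\frac{n\Delta}{\eps}$ points and, with high probability, produces a $(1\pm\eps)$-coreset for every prefix of the substream it processes.

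First, I would maintain a dynamic family of online coreset instances, each rooted at a distinct starting time $t_j$ in the stream. A new instance is spawned whenever the cumulative $(k,z)$-clustering cost on the suffix since the previous spawn grows by a $(1+\eps)$ factor, and any instance whose root has fully expired beyond the active window is discarded. Since costs lie in $[1, n\Delta^z]$, this scheduling keeps at most $\O{\log(n\Delta/\eps)}$ instances alive simultaneously, a factor absorbed into the $\polylog$.

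Second, to answer a query at time $n$, I would pick the most recent surviving checkpoint $t_j \le n-W+1$ and use its online coreset $S_j$ as an approximate coreset for the window. The online coreset guarantee applied to the substream from $t_j$ yields $\sum_{x \in S_j} w_x \dist(x,C)^z \in (1\pm\eps) \sum_{i=t_j}^n \dist(x_i,C)^z$ for every candidate set $C$ of $k$ centers in $[\Delta]^d$, via a union bound over an $\eps$-net of candidate centers (or after applying a terminal embedding to first reduce dimension). The checkpoint-spacing rule then ensures that the slack prefix $\{x_{t_j},\ldots,x_{n-W}\}$ contributes only an $\O{\eps}$ fraction of $\OPT$ on the window under any such $C$, so $S_j$ is also a $(1+\eps)$-coreset for $\{x_{n-W+1},\ldots,x_n\}$.

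The main obstacle is controlling the slack prefix uniformly across all candidate $C$: the weights in $S_j$ are calibrated to the full suffix from $t_j$, and a $C$ that is good for the window may inflate the slack contribution. The remedy I would formalize is a case analysis over a $(1+\eps)$-geometric grid of guesses of $\OPT$ on the window: either the slack contribution is at most an $\O{\eps}$ fraction of the guess (so removing it falls within the error budget), or the slack dominates, in which case the cost-doubling rule would have spawned another checkpoint in $(t_j, n-W+1]$, contradicting the choice of $t_j$. Propagating this through the guesses of $\OPT$ and the net over $C$ completes correctness, while summing sample sizes across the $\O{\log(n\Delta/\eps)}$ active instances gives the claimed total bound.
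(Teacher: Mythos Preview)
Your proposal follows the smooth-histogram template: maintain checkpoints spaced by cost doubling, then argue the slack prefix before the window contributes only an $\O{\eps}$ fraction of the window's optimum. The gap is precisely here. The $(k,z)$-clustering objective is \emph{not} smooth (as the paper notes, citing \cite{BravermanLLM16}): two nested suffixes whose optimal costs are within $(1+\eps)$ at some time need not remain close after more points arrive. Concretely, your cost-doubling rule tracks the optimal cost of the suffix from $t_j$, so when you say ``the slack dominates, so another checkpoint would have been spawned,'' you are comparing the slack's \emph{optimal} cost to the current suffix's optimal cost. But the quantity you actually need to bound is $\sum_{i=t_j}^{n-W}\dist(x_i,C)^z$ for the \emph{query} set $C$, which can be arbitrarily larger than the slack's optimal cost---take $C$ optimal for the window but placing no center near the slack points. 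Your proposed case analysis over guesses of $\OPT$ does not close this, because neither case forces the slack's cost \emph{under $C$} to be small; the contradiction you derive is about optimal costs, not about cost under $C$. This is exactly why histogram methods for clustering lose constant factors (via the generalized triangle inequality) rather than achieving $(1+\eps)$.

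The paper sidesteps smoothness entirely with a different reduction. It stores a buffer of the most recent $S(n,d,k,\eps',\delta')$ points and feeds them to the online coreset construction \emph{in reverse arrival order}; since the online coreset is valid for every prefix of its input, reversing converts prefix guarantees into suffix guarantees, yielding a coreset for the sliding window directly. This is then wrapped in a standard merge-and-reduce tree (\thmref{thm:coreset:framework}) to cover the full stream, and an offline $(k,z)$-clustering solver is run on the resulting coreset. No comparison between suffix costs is ever needed.
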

In particular, our bounds in \thmref{thm:cluster:sw} achieve $\frac{k}{\eps^4}\,\polylog\frac{n\Delta}{\eps}$ words of space for $k$-median clustering and $k$-means clustering, i.e., $z=1$ and $z=2$, respectively, matching the lower bounds of \cite{Cohen-AddadLSS22,HuangLW22} up to polylogarithmic factors.  

\begin{table}[!htb]
\centering
\begin{tabular}{|c|c|c|c|}\hline
Reference & Accuracy & Space & Setting \\\hline
\cite{BabcockDMO03} & $2^{\O{1/\eps}}$ & $\O{\frac{k}{\eps^4}W^{2\eps}\log^2 W}$ & $k$-median, $\eps\in\left(0,\frac{1}{2}\right)$ \\\hline
\cite{BravermanLLM16} & $C>2$ & $\O{k^3\log^6 W}$ & $k$-median and $k$-means \\\hline
\cite{EpastoLVZ17} & $C>2^{14}$ & $k\,\polylog(W,\Delta)$ & $(k,z)$-clustering \\\hline
\cite{EpastoMMZ22} & $(1+\eps)$ & $\frac{(kd+d^{Cz})}{\eps^3}\,\polylog\left(W,\Delta,\frac{1}{\eps}\right)$,  $C\ge 7$ & $(k,z)$-clustering \\\hline
Our work & $(1+\eps)$ & $\frac{k}{\min(\eps^4,\eps^{2+z})}\,\text{polylog}\frac{n\Delta}{\eps}$ & $(k,z)$-clustering \\\hline
\end{tabular}
\caption{Summary of $(k,z)$-clustering results in the sliding window model for input points in $[\Delta]^d$ on a window of size $W$}
\tablelab{table:summary}
\end{table}

Moreover, our algorithm actually produces a coreset, i.e., a data structure that approximately answers the clustering cost of the underlying dataset with respect to any set of $k$ centers, not just the optimal $k$ centers. 
\begin{restatable}{theorem}{thmcoresetmetricsw}
\thmlab{thm:coreset:metric:sw}
There exists an algorithm that samples $\frac{k}{\min(\eps^4,\eps^{2+z})}\,\polylog\frac{n\Delta}{\eps}$ points and with high probability, outputs a $(1+\eps)$-coreset to $(k,z)$-clustering in the sliding window model for general metrics on $[\Delta]^d$. 
\end{restatable}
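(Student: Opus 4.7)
The plan is to reduce the sliding window coreset construction to the online coreset procedure developed earlier in the paper, which samples the claimed number of points while producing a valid $(1+\eps)$-coreset for every prefix of the stream simultaneously. The online coreset construction is built on sensitivity sampling backed by \Meyerson-style online facility location, both of which make sense in a general metric space, so I would first verify that the earlier online coreset argument carries over to a general metric on $[\Delta]^d$ essentially unchanged, the only dependence on the ground set being through its aspect ratio $\Delta$, the cluster parameter $k$, and the exponent $z$.

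The second step is to lift this prefix coreset into a sliding window coreset. I would maintain $\O{\log(n\Delta)}$ parallel instances of the online coreset, each started at a different ``checkpoint'' time $t_i$ in the stream, and prune them in a smooth-histogram-style fashion: we keep a checkpoint whenever the $(k,z)$-clustering cost of the suffix starting there departs from that of the previous retained checkpoint by more than a factor of $1+\eps/\polylog(n\Delta)$. Implementing the pruning requires a coarse estimate of the suffix clustering cost, which can be obtained by a \Meyerson-type facility location surrogate run in parallel; because consecutive retained checkpoints then have clustering costs within a $(1+\eps/\polylog(n\Delta))$ factor, the number of simultaneously retained instances is $\polylog(n\Delta)$, giving the claimed total sample complexity up to logarithmic overhead.

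At query time the retained checkpoint whose start lies just before the left boundary of the window is chosen, and the current output of its online coreset is returned. The slight mismatch between the checkpoint time and the true window boundary contributes at most a $(1+\eps)$ multiplicative factor to the clustering cost on \emph{every} candidate set of $k$ centers, which I would show by combining the online coreset guarantee on the relevant prefix with the smoothness-style pruning rule: the points between the retained checkpoint and the window boundary contribute a vanishing fraction of the $(k,z)$-clustering objective for every $k$-center set, because otherwise removing them would have changed the clustering cost by more than the pruning threshold and we would have retained a later checkpoint.

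The main obstacle I anticipate is the interaction between the online sensitivity bounds and the sliding window. Online sensitivities must be valid upper bounds on the sensitivities at every prefix, but the \emph{active} dataset in the sliding window model is a suffix rather than a prefix; reconciling these two views requires showing that sensitivities defined with respect to any suffix of the stream are dominated, up to constants, by the sensitivities computed online against the relevant checkpointed suffix. Handling this discrepancy carefully — without incurring additional $1/\eps$ factors in the sample complexity — is the heart of the argument, and the $\polylog(n\Delta)$ overhead from running parallel checkpoint instances is precisely the slack that absorbs the resulting error, keeping the final sample size at $\frac{k}{\min(\eps^4,\eps^{2+z})}\,\polylog\frac{n\Delta}{\eps}$.
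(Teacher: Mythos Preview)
Your proposal has a genuine gap in the second and third steps. The smooth-histogram style pruning you describe---keep a checkpoint whenever the optimal $(k,z)$-clustering cost of the suffix changes by a $(1+\eps/\polylog)$ factor---does not control what you need. You assert that ``the points between the retained checkpoint and the window boundary contribute a vanishing fraction of the $(k,z)$-clustering objective for \emph{every} $k$-center set,'' but the pruning rule only compares \emph{optimal} clustering costs of two suffixes. Two suffixes $[t_i,t]$ and $[t_{i+1},t]$ can have nearly identical optimal costs while the extra points in $[t_i,t_{i+1})$ contribute arbitrarily to $\Cost(\cdot,C)$ for some specific $C$: take $k\ge 2$, put the extra points at a location $A$ and the window points at a faraway location $B$; both optimal costs are essentially zero, yet for $C=\{B\}$ (or any $C$ far from $A$) the extra points dominate. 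This is precisely the failure of smoothness for clustering that the paper flags in its technical overview, citing \cite{BravermanLLM16}; your argument ``otherwise removing them would have changed the clustering cost by more than the pruning threshold'' conflates the optimal cost with the cost at an arbitrary $C$.

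The paper avoids this obstacle by a different reduction. Rather than running forward online coresets from many checkpoints and trying to argue that a slightly-too-large suffix is close enough, it stores recent points explicitly in blocks and feeds them to the online coreset construction \emph{in reverse order of arrival}, then combines blocks via merge-and-reduce (\algref{alg:coreset:to:sw}, \thmref{thm:coreset:framework}). Because an online coreset is valid for every prefix of its input, reversing the stream turns this into validity for every \emph{suffix} of the original stream, so at query time one obtains an honest coreset for exactly the window, with no ``extra'' points to charge away. This sidesteps smoothness entirely; your checkpoint scheme would need a fundamentally different pruning criterion (one that tracks all center sets, not just the optimum) to be salvageable, and it is unclear how to do that in the claimed space.
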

We emphasize that the guarantees of \thmref{thm:coreset:metric:sw} are for general metrics on $[\Delta]^d$, such as $L_p$ metrics. 
Note that in light of the properties of coresets, the guarantee of \thmref{thm:cluster:sw} follows from taking a coreset for $(k,z)$-clustering on Euclidean distances and then using an offline algorithm for $(k,z)$-clustering for post-processing after the data stream, i.e., see \thmref{thm:coreset:sw}. 

Along the way, we provide a construction for a $(1+\eps)$-online coreset for $(k,z)$-clustering for general metrics on $[\Delta]^d$. 
An online coreset for $(k,z)$-clustering is a data structure on a data stream that will not only approximately answer the clustering cost of the underlying dataset with respect to any set of $k$ centers, but also approximately answer the clustering cost of \emph{any prefix of the data stream} with respect to any set of $k$ centers. 

\begin{restatable}{theorem}{thmclusteronlinecoreset}
\thmlab{thm:cluster:online:coreset}
There exists an algorithm that samples $\frac{k}{\min(\eps^4,\eps^{2+z})}\,\polylog\frac{n\Delta}{\eps}$ points and with high probability, outputs a $(1+\eps)$-online coreset for $(k,z)$-clustering. 
\end{restatable}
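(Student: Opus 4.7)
The plan is to adapt the sensitivity sampling framework that yields near-optimal offline coresets for $(k,z)$-clustering (as in Cohen-Addad, Saulpic, and Schwiegelshohn) to the online setting. For each arriving point $x_t$ I would define the \emph{online sensitivity}
\[
\sigma_t \;=\; \sup_{C:\,|C|=k} \frac{\dist(x_t,C)^z}{\sum_{i\le t}\dist(x_i,C)^z},
\]
and sample $x_t$ independently with probability $p_t=\min(1,\tau\cdot\tilde\sigma_t)$, where $\tilde\sigma_t$ is a constant-factor upper estimate of $\sigma_t$ and $\tau=\Theta\!\left(\frac{1}{\min(\eps^4,\eps^{2+z})}\polylog\frac{n\Delta}{\eps}\right)$. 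Sampled points are re-weighted by $1/p_t$. The estimate $\tilde\sigma_t$ would be produced from an online bicriteria approximation to $(k,z)$-clustering on the prefix $x_1,\dots,x_t$; the \Meyerson-style subroutine implicit in the preamble's macros (or an online variant) gives such a bicriteria solution with $O(k\polylog(n\Delta))$ centers, from which $\tilde\sigma_t$ follows by bounding $\dist(x_t,C)$ from below and $\sum_{i\le t}\dist(x_i,C)^z$ from above using the bicriteria cost, analogously to the standard offline sensitivity upper bound.

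The first key step is to bound $\sum_{t\le n}\tilde\sigma_t$. I would show this sum is $O(k\polylog(n\Delta))$; unlike the offline case, where the sum of sensitivities is $O(k)$, the online sum picks up extra logarithmic factors because early points that look critical with respect to short prefixes need not be critical at the end. Concretely, one partitions the stream into $O(\log n)$ phases in which the optimal $(k,z)$-cost roughly doubles, and within each phase one appeals to the offline sensitivity bound applied to the phase's prefix. Multiplying by $\tau$ then yields the claimed sample complexity $\frac{k}{\min(\eps^4,\eps^{2+z})}\polylog\frac{n\Delta}{\eps}$.

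The second key step is correctness: for every prefix $t$ and every candidate center set $C$, the weighted sample $S_t$ satisfies $\bigl|\Cost(S_t,C)-\Cost(X_t,C)\bigr|\le\eps\,\Cost(X_t,C)$. For a single prefix this follows from the Cohen-Addad--Saulpic--Schwiegelshohn coreset analysis: one builds a net of candidate center sets $C$ by quantizing on rings around the bicriteria centers, uses a Bernstein-type tail bound pointwise on the net, and then chains to all $C$ via the $(1+\eps)$-Lipschitz property of the cost in $C$. To make the argument \emph{online}, I would take a union bound over the $n$ prefixes (absorbing a $\log n$ factor into $\tau$) and observe that the chaining/net construction can be defined using only the bicriteria solution at prefix $t$, whose quality already holds at every $t$ by the online bicriteria guarantee.

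The main obstacle is the simultaneous uniformity over both prefixes and center sets while keeping $\tau$ near-optimal. Naively union-bounding the net over all $n$ prefixes risks inflating the $1/\eps$ dependence. To avoid this, I would exploit that a point $x_t$ affects the cost at prefix $s\ge t$ only through its contribution to $\Cost(X_s,C)$, so the martingale increments driving the concentration are already controlled by $\sigma_t$ (independent of $s$); a single Freedman/Bernstein inequality applied to the entire weighted process then yields a $(1+\eps)$-approximation at every prefix $s$ with the same $\tau$, up to an additional $\log n$ factor absorbed in the $\polylog$. Combining the sample-complexity bound with this uniform correctness yields \thmref{thm:cluster:online:coreset}.
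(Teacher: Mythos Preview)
Your proposal has a genuine gap: it conflates the \cite{Cohen-AddadSS21} construction with pure sensitivity sampling, and these are not the same. The near-optimal $k/\min(\eps^4,\eps^{2+z})$ dependence in \cite{Cohen-AddadSS21} does \emph{not} come from sampling each point proportionally to its sensitivity; it comes from first partitioning the input into rings and groups around a bicriteria solution $\calA$ and then sampling within each group $G$ with probability at least $\zeta_x+\eta_x$, where $\eta_x\propto\Cost(x,\calA)/\Cost(G,\calA)$ and $\zeta_x$ is a per-cluster term inside $G$. Because the denominator is the \emph{group} cost rather than the total cost, your $p_t=\tau\tilde\sigma_t$ with $\tilde\sigma_t\approx\Cost(x_t,\calA)/\Cost(X_t,\calA)+1/|C_i|$ does not dominate $\zeta_x+\eta_x$ unless you inflate $\tau$ by a factor that also inflates the total sample count; in particular you do not recover the linear-in-$k$ bound this way. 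The paper's technical overview says exactly this: ``it is not known how to achieve optimal bounds simply using sensitivity sampling, and indeed the optimal coreset constructions use slightly more nuanced sampling schemes.'' So the step ``for a single prefix this follows from the Cohen-Addad--Saulpic--Schwiegelshohn coreset analysis'' does not go through with your sampling probabilities.

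The paper's actual route is to reproduce the ring/group structure \emph{online}, and here a second issue your outline misses becomes central. The groups are defined relative to the bicriteria solution, and an ordinary online bicriteria algorithm may reassign $x_t$ to a closer center opened after time $t$; then the group $x_t$ sits in at a later prefix $s$ differs from the group it was in when you sampled it, and the CSS21 variance bound at prefix $s$ no longer matches the probability you committed to at time $t$. The paper fixes this by exploiting the Meyerson sketch's \emph{irrevocable} assignment (\thmref{thm:bicrit:props}): once $x_t$ is mapped to a center it stays there, so its ring $R_{i,j}$ and group $G_{j,b}$ are determined at arrival and never change. Under this consistency, sampling $x_t$ with probability $\min(4\gamma\log n/r_t,1)$, where $r_t$ is the current size of its group, is shown to dominate $\zeta_x+\eta_x$ at \emph{every} later time (\lemref{lem:alg:samp:probs}), and a harmonic sum over groups bounds the total samples (\lemref{lem:relsample:space}). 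Your Freedman-martingale idea for handling all prefixes simultaneously is attractive, but it cannot substitute for the per-group concentration that drives the $\eps^{2+z}$ exponent, nor for the consistent assignment that makes the group of each point time-invariant.
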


We remark that \thmref{thm:cluster:online:coreset} further has the attractive property that once a point is sampled into the online coreset at some point in the stream, then the point irrevocably remains in the online coreset. 
That is, the online coreset essentially satisfies two different definitions of online: 1) the data structure is a coreset for any prefix of the stream and 2) points sampled into the data structure will never be deleted from the data structure. 

We further remark that due to leveraging the coreset construction of \cite{Cohen-AddadSS21,Cohen-AddadLSS22,Cohen-AddadLSSS22}, we can similarly trade a factor of $\frac{1}{\eps^z}$ for a $\poly(k)$ in the guarantees of \thmref{thm:cluster:sw}, \thmref{thm:coreset:metric:sw}, and \thmref{thm:cluster:online:coreset}. 

By contrast, the lower bound by \cite{Cohen-AddadLSS22} states that any offline coreset construction for $k$-means clustering only requires $\Omega\left(\frac{k}{\eps^2}\right)$ points. 
This lower bound was later strengthened to $\Omega\left(\frac{k}{\eps^{2+z}}\right)$ points by \cite{HuangLW22}, for which matching upper bounds are given by \cite{Cohen-AddadSS21,Cohen-AddadLSS22}. 
Thus our online coreset constructions are near-optimal in the $k$ and $\frac{1}{\eps}$ dependencies for $z>1$ and nearly match the best known offline constructions for $z=1$. 

It is thus a natural question to ask whether our polylogarithmic overheads in \thmref{thm:cluster:online:coreset} are necessary for an $(1+\eps)$-online coreset. 
We show that in fact, a logarithmic overhead is indeed necessary to maintain a $(1+\eps)$-online coreset.

\begin{restatable}{theorem}{thmlbonlinecoreset}
\thmlab{thm:lb:online:coreset}
Let $\eps\in(0,1)$. 
For sufficiently large $n$, $d$, and $\Delta$, there exists a set $X\subset[\Delta]^d$ of $n$ points $x_1,\ldots,x_n$ such that any $(1+\eps)$-online coreset for $k$-means clustering on $X$ requires $\Omega\left(\frac{k}{\eps^2}\log n\right)$ points. 
\end{restatable}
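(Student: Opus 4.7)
The plan is to reduce from the offline $\Omega(k/\eps^2)$ coreset lower bound (e.g.\ of~\cite{Cohen-AddadLSS22,HuangLW22}) by concatenating $L=\Theta(\log n)$ geometrically scaled copies of the offline hard instance into a single stream. The intuition is that at the end of each of the $L$ phases the current prefix still contains a fresh, scale-separated hard instance, so the online coreset is forced to absorb $\Omega(k/\eps^2)$ new samples per phase. Because the samples introduced in different phases come from disjoint regions of the ambient space, they cannot be shared across phases, and the total number of samples is at least $\Omega(\tfrac{k}{\eps^2}\log n)$.

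Concretely, I would start from an offline hard instance $H\subset\mathbb{R}^{d_0}$ of $n_0=\Theta(k/\eps^2)$ points inside the unit ball, for which any $(1+\eps)$-coreset for $k$-means on $H$ has size $\Omega(k/\eps^2)$. Fix a large scale factor $R=\poly(1/\eps,L)$ and, in phase $\ell\in[L]$, insert the scaled copy $H_\ell=R^\ell\cdot H$ translated to a well-separated anchor point $p_\ell$; the full stream $H_1,\ldots,H_L$ then has length $n$ and fits inside $[\Delta]^d$ with $\Delta=\poly(n,1/\eps)$ and $d=d_0$. For the per-phase argument, I would focus on the online coreset $C_\ell$ at time $t_\ell$ (end of phase $\ell$) and its marginal $S_\ell:=C_\ell\setminus C_{\ell-1}\subseteq H_\ell$. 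Using the $(1\pm\eps)$-coreset guarantees at times $t_\ell$ and $t_{\ell-1}$ and the additivity of $\Cost$ in the underlying set, subtraction yields
\[
\Cost(S_\ell,C)\ =\ (1\pm \O{\eps})\Cost(H_\ell,C)\ \pm\ \O{\eps}\Cost\bigl(\textstyle\bigcup_{i<\ell}H_i,C\bigr),
\]
and the geometric scaling is designed so that, for every $k$-center configuration $C$ of interest, the second (parasitic) term is dominated by $\Cost(H_\ell,C)$. Consequently, $S_\ell$ must be a $(1+\O{\eps})$-coreset for $k$-means on $H_\ell$ against the relevant class of centers, which by the offline lower bound forces $|S_\ell|\geq \Omega(k/\eps^2)$. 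Summing the disjoint sets $S_1,\ldots,S_L$ over the $L=\Theta(\log n)$ phases gives the claimed bound.

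The main obstacle I anticipate is the telescoping subtraction step: coreset errors are multiplicative in the total cost, so after subtracting the values at times $t_\ell$ and $t_{\ell-1}$ the residual additive error is of order $\eps\cdot\Cost(\text{prefix}_\ell,C)$, which would ordinarily dwarf $\Cost(H_\ell,C)$. The geometric scaling is precisely what forces $H_\ell$ to dominate the prefix cost, but only for a restricted family of $k$-center configurations $C$---essentially those that have a dedicated center near each $p_i$ with $i<\ell$ (so that earlier phases are treated as $O(R^{2(\ell-1)})$ lower-order terms). The key technical step is therefore twofold: (i) choose the anchors $p_\ell$ and the scale factor $R$ as a large enough polynomial in $1/\eps$ and $L$ to beat the multiplicative slack while keeping $\Delta$ polynomial in $n$; and (ii) verify that the restricted family of $C$'s is still rich enough to trigger the offline $\Omega(k/\eps^2)$ lower bound on $H_\ell$---which is essentially lossless when $k\geq L+2$ since at most $\ell-1\leq L$ centers are "spent" on absorption, and the $k< L$ regime can be handled separately by a direct single-scale lower bound that already dominates $\Omega(\tfrac{k}{\eps^2}\log n)$.
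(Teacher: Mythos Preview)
Your high-level plan---concatenate $L=\Theta(\log n)$ amplified copies of the offline hard instance and argue that each phase forces $\Omega(k/\eps^2)$ fresh samples---is exactly the paper's strategy, and your telescoping/subtraction step is essentially the paper's \lemref{lem:lb:each:stream}. The real difference is in \emph{how} the $\ell$-th phase is made to dominate the prefix cost. You scale and translate within a fixed ambient dimension, which (as you correctly identify) forces you to dedicate one center near each earlier anchor $p_i$ to kill the parasitic $\eps\cdot\Cost(\text{prefix}_{\ell-1},C)$ term; this caps the number of usable phases at $O(k)$. The paper instead blows up the dimension by a factor of $L$, places phase $i$ in its own block of $2d'$ fresh coordinates, and amplifies by inserting $\tau^{i}$ copies of each elementary vector. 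Because every point in every phase is then a unit vector, all inter-phase distances are exactly $\sqrt{2}$, so the hard query $U_i$ can use \emph{all} $k$ centers inside block $i$ and the cost of each earlier phase $a$ is automatically $\Theta(d'\tau^{a})$---bounded without spending a single center---while the $\tau^{i}$ weight makes phase $i$ dominate. This is why the paper's induction goes through for every $k\ge 1$.

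Your proposed patch for the regime $k<L$---``a direct single-scale lower bound that already dominates $\Omega(\tfrac{k}{\eps^2}\log n)$''---does not work: the offline bound gives only $\Omega(k/\eps^2)$, and for fixed $k$ with $n\to\infty$ (precisely the regime in which the $\log n$ factor is meaningful) no single-scale instance supplies the missing factor. A secondary issue is that with only $n_0=\Theta(k/\eps^2)$ unrepeated points per phase, your stream has length $Ln_0$, so $L=\Theta(\log n)$ only for one tuned value of $n$; the paper's $\tau^{i}$-fold repetition makes the stream length $\Theta(d'\tau^{L})$, yielding $L=\Theta(\log n)$ for every sufficiently large $n$. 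Both issues disappear if you replace scaling-and-translation by the paper's two devices: disjoint coordinate blocks (so no centers are spent on absorption) and exponential repetition (so $L=\Theta(\log n)$ uniformly in $n$).
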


We emphasize that combined with existing offline coreset constructions~\cite{Cohen-AddadLSS22,Cohen-AddadLSSS22}, \thmref{thm:lb:online:coreset} shows a separation between the problems of constructing offline coresets and online coresets. 
That is, the problem of maintaining a data structure that recovers coresets for all prefixes of the stream is provably harder than maintaining a coreset for an offline set of points. 

\subsection{Technical Overview}
In this section, we give a high-level overview of our techniques. 
We also describe the limitations of many natural approaches. 

\paragraph{Shortcomings of histograms and sensitivity sampling.}
A first attempt at clustering in the sliding window model might be to adapt the popular exponential histogram~\cite{DatarGIM02} and smooth histogram techniques~\cite{BravermanO07}. 
These frameworks convert streaming algorithms to sliding window algorithms in the case that the objective function is smooth, which informally means that once a suffix of a data stream becomes a good approximation of the overall data stream, then it always remains a good approximation, regardless of the values of new elements that arrive in the stream.  
Unfortunately, \cite{BravermanLLM16} showed that the $k$-clustering objective function is not smooth and thus these histogram-based frameworks cannot work. 
Nevertheless, they gave the first constant-factor approximation by showing that the $k$-clustering objective function is almost-smooth using a generalized triangle inequality, which inherently loses constant factors and thus will not suffice for our goal of achieving a $(1+\eps)$-approximation. 

Another approach might be to adapt the popular sensitivity sampling framework of coreset construction~\cite{FeldmanL11,FeldmanS12,BravermanFLSZ21,Cohen-AddadWZ23}. 
The sensitivity sampling framework assigns a value to each point, called the sensitivity, which intuitively quantifies the ``importance'' of that point, and then samples each point with probability proportional to its sensitivity. 
\cite{BravermanDMMUWZ20} observed that sliding window algorithms can be achieved from \emph{online} sensitivity sampling, where the importance of each point is measured against the prefix of the stream, and then running the process in reverse at each time, so that more emphasis is placed on the suffix of the sliding window. 
At a high level, this is the intuition taken by \cite{EpastoLVZ17,EpastoMMZ22}, which leverage data structures that prioritize more recent elements of the data stream. 
However, it is not known how to achieve optimal bounds simply using sensitivity sampling, and indeed the optimal coreset constructions use slightly more nuanced sampling schemes~\cite{Cohen-AddadSS21,Cohen-AddadLSS22}. 

\paragraph{Sliding window algorithms from online coresets.}
Instead, we recall an observation by~\cite{BravermanDMMUWZ20}, who noted that deterministic constructions for online coresets for linear algebraic problems can be utilized to obtain sliding window algorithms for the corresponding linear algebraic problems. 
We first extend this observation to randomized constructions for online coresets for $k$-clustering problem. 

The intuition is quite simple. 
Given an $(1+\eps)$-online coreset algorithm for a $k$-clustering problem on a data stream of length $n$ from $\mathbb{R}^d$ that stores $S(n,d,k,\eps,\delta)$ weights points and succeeds with probability $1-\delta$, we store the $S(n,d,k,\eps',\delta')$ most recent points in the stream, where $\eps'=\O{\frac{\eps}{\log n}}$ and $\delta'=\frac{\delta}{\poly(n)}$. 
We then feed the $S(n,d,k,\eps',\delta')$ points to the online coreset construction in \emph{reverse order of their arrival}. 
Since the online coreset preserves all costs for all prefixes of its input, then the resulting data structure will preserve all costs for all \emph{suffixes} of the data stream. 
To extend this guarantee to the entire stream, including the sliding window, we can then use a standard merge-and-reduce framework.  
It thus remains to devise a $(1+\eps)$-online coreset construction for $k$-clustering with near-optimal sampling complexity. 

\paragraph{Online coreset construction.}
To that end, our options are quite limited, as to the best of our knowledge, the only offline coreset constructions using $\tO{\frac{k}{\eps^4}\log n}$ words of space when the input is from $[\Delta]^d$ are due to \cite{Cohen-AddadSS21,Cohen-AddadLSS22}. 
Fortunately, although the analyses of correctness for these sampling schemes are quite involved, the constructions themselves are quite accessible. 
For example, \cite{Cohen-AddadSS21} first uses an $(\alpha,\beta)$-approximation, i.e., a clustering that achieves $\alpha$-approximation to the optimal cost but uses $\beta k$ centers, to partition the underlying dataset $X$ into disjoint concentric rings around each of the $\beta k$ centers. 
These rings are then gathered into groups and it is shown that by independently sampling a fixed number of points with replacement from each of the groups suffices to achieve a $(1+\eps)$-coreset. 
Their analysis argues that the contribution of each of the groups toward the overall $k$-clustering cost is preserved through an expectation and variance bounding argument, and then taking a sophisticated union bound over a net over the set of possible centers. 
Thus their argument still holds when each point of the dataset is independently sampled by the data structure with probability proportional to the probability it would have been sampled by the group. 
Moreover, independently sampling each point with a higher probability can only decrease the variance, so that correctness is retained, though we must also upper bound the number of sampled points. 
Crucially, independently sampling each point can be implemented in the online setting and the probability of correctness can be boosted to union bound over all times in the stream, which facilitates the construction of our $(1+\eps)$-online coreset, given an $(\alpha,\beta)$-approximation. 

\paragraph{Consistent $(\alpha,\beta)$-approximation.}
It seemingly remains to find $(\alpha,\beta)$-approximations for $k$-clustering at all times in the stream. 
A natural approach would be to use an algorithm that achieves a $(\alpha,\beta)$-approximation at a certain time in the stream with constant probability, e.g.,~\cite{SongYZ18}, boost the probability of success to $1-\frac{1}{\poly(n)}$, and the union bound to argue correctness over all times in the stream. 
However, a subtle pitfall here is that the rings and groups in the offline coreset construction of \cite{Cohen-AddadSS21} are with respect to a specific $(\alpha,\beta)$-approximation. 
Hence their analysis would no longer hold if a point $x_t$ was assigned to cluster $i_1$ at time $t$ when the sampling process occurs but then assigned to cluster $i_2$ at the end of the stream. 
Therefore, we require a consistent $(\alpha,\beta)$-approximation, so that once the algorithm assigns point $x_t$ to cluster $i$, then the point $x_t$ will always remain in cluster $i$ even if a newer and closer center is subsequently opened later in the stream. 
To that end, we invoke a result of \cite{EpastoLVZ17} that analyzes the popular Meyerson online facility location algorithm, along with a standard guess-and-double approach for estimating the input parameter to the Meyerson subroutine. 

\paragraph{Lower bound.} 
The intuition for our lower bound that any $(1+\eps)$-online coreset for $(k,z)$-clustering requires $\Omega\left(\frac{k}{\eps^2}\right)$ is somewhat straightforward and in a black-box manner. 
We first observe that \cite{Cohen-AddadLSS22} showed the existence of a set $X$ of $\Omega\left(\frac{k}{\eps^2}\right)$ unit vectors in $\mathbb{R}^d$ such that any coreset with $o\left(\frac{k}{\eps^2}\right)$ samples provably cannot accurately estimate the $(k,z)$-clustering cost for a set $C$ of $k$ unit vectors. 

Since an online $(1+\eps)$-coreset must answer queries on all prefixes of the stream, we embed $\Omega(\log n)$ instances of $X$. 
We first increase the dimension by a $\log n$ factor so that each of these instances can have disjoint support. 
We then give each of the instances increasingly exponential weight to force the data structure to sample $\Omega\left(\frac{k}{\eps^2}\right)$ points for each instance. 
Specifically, we insert $\tau^i$ copies of the $i$-th instance of $X$, where $\tau>1$ is some constant. 
Because the weight of the $i$-th instance is substantially greater than the sum of the weights of all previous instances, then any $(1+\eps)$-online coreset must essentially be a $(1+\eps)$-offline coreset for the $i$-th instance, thus requiring $\Omega\left(\frac{k}{\eps^2}\right)$ points for the $i$-th instance. 
This reasoning extends to all $\Omega(\log n)$ instances, thus showing that any online $(1+\eps)$-coreset requires $\Omega\left(\frac{k}{\eps^2}\log n\right)$ points. 

\section{Preliminaries}
For a positive integer $n$, we use the notation $[n]$ to denote the set $\{1,\ldots,n\}$. 
Similarly, we use $[\Delta]^d$ to denote $\{1,\ldots,\Delta\}^d$. 
We use $\poly(n)$ to denote a fixed polynomial in $n$ with degree determined as necessary by setting the appropriate constants in corresponding variables. 
Similarly, we use $\polylog(n)$ to denote $\poly(\log n)$. 
We suppress polylogarithmic dependencies by writing $\tO{f(\cdot)}=\O{f(\cdot)}\,\polylog f(\cdot)$. 

For $(k,z)$-clustering on a set $X=\{x_1,\ldots,x_n\}\subset\mathbb{R}^d$ using a set $C$ of $k$ centers and a distance function $\dist(\cdot,\cdot)$, we define the notation $\Cost(X,C)=\sum_{i=1}^n\min_{c\in C}\dist(x_i,c)^z$. 
We also define the notation $\Cost_{|S|\le k}(X,S):=\min_{S: |S|\le k}\Cost(X,S)$, so that $\Cost_{|S|\le k}$ is the cost of an optimal $(k,z)$-clustering. 

\begin{definition}[$(\alpha,\beta)$-approximation]
We say a set of centers $C$ provides an $(\alpha,\beta)$-approximation to the optimal $k$-means clustering on a set $X$ if $|C|\le\beta k$ and 
\[\Cost(X,C)\le\alpha\OPT.\]
\end{definition}

\begin{definition}[Coreset]
\deflab{def:coreset}
A \emph{coreset} for $(k,z)$-clustering on an approximation parameter $\eps>0$ and a set $X$ of points $x_1,\ldots,x_n\in\mathbb{R}^d$ with distance function $\dist$ is a subset $S$ of weighted points of $X$ with weight function $w$ such that for any set $C$ of $k$ points, we have
\[(1-\eps)\sum_{i=1}^n\dist(x_i,C)^z\le\sum_{q\in S}w(q)\dist(q,S)^z\le(1+\eps)\sum_{i=1}^n\dist(x_i,C)^z.\]
\end{definition}

\begin{definition}[Online Coreset]
An \emph{online coreset} for $(k,z)$-clustering on an approximation parameter $\eps>0$ and a set $X$ of points $x_1,\ldots,x_n\in\mathbb{R}^d$ with distance function $\dist$ is a subset $S$ of weighted points of $X$ with weight function $w$ such that for any set $C$ of $k$ points and for any $t\in[n]$, we have
\[(1-\eps)\sum_{i=1}^t\dist(x_i,C)^z\le\sum_{q\in S_t}w(q)\dist(q,S_t)^z\le(1+\eps)\sum_{i=1}^t\dist(x_i,C)^z,\]
where $S_t=S\cap\{X_1,\ldots,X_t\}$, i.e., the subset of $S$ that has arrived at time $t$. 
\end{definition}


\begin{theorem}[Bernstein's inequality]
\thmlab{thm:bernstein:conc}
Let $X_1,\ldots,X_n$ be independent random variables such that $\mathbb{E}[X_i^2]<\infty$ and $X_i\ge 0$ for all $i\in[n]$. 
Let $X=\sum_i X_i$ and $\gamma>0$. 
Then
\[\PPr{X\le\mathbb{E}[X]-\gamma}\le\exp\left(\frac{-\gamma^2}{2\sum_i\mathbb{E}[X_i^2]}\right).\]
If $X_i-\mathbb{E}[X_i]\le\Delta$ for all $i$, then for $\sigma_i^2=\mathbb{E}[X_i^2]-\mathbb{E}[X_i]^2$,
\[\PPr{X\ge\mathbb{E}[X]+\gamma}\le\exp\left(\frac{-\gamma^2}{2\sum_i\sigma_i^2+2\gamma\Delta/3}\right).\]
\end{theorem}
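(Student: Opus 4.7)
The plan is to apply the standard Chernoff--Cram\'er exponential moment method to each tail separately: bound the moment generating function of $X-\Ex{X}$ on one side, apply Markov's inequality to $e^{\pm tX}$, and optimize the free parameter $t>0$. The lower tail will be the easier of the two because non-negativity of the $X_i$ gives a clean bound on the MGF of $-X_i$; the upper tail requires a more delicate bound that also exploits the one-sided cap $X_i-\Ex{X_i}\le\Delta$.

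For the lower tail, the key elementary inequality is $e^{-u}\le 1-u+u^2/2$ for all $u\ge 0$, which I would verify by the standard derivative argument (the function $f(u)=1-u+u^2/2-e^{-u}$ satisfies $f(0)=0$, $f'(0)=0$, and $f''(u)=1-e^{-u}\ge 0$ on $[0,\infty)$, so $f\ge 0$). Applying this with $u=tX_i$ for any $t>0$, taking expectations, and using $1+y\le e^y$ yields $\Ex{e^{-tX_i}}\le\exp(-t\,\Ex{X_i}+(t^2/2)\Ex{X_i^2})$. Independence then gives $\Ex{e^{-tX}}\le\exp(-t\,\Ex{X}+(t^2/2)\sum_i\Ex{X_i^2})$, and Markov's inequality applied to $e^{-tX}$ yields $\PPr{X\le\Ex{X}-\gamma}\le\exp(-t\gamma+(t^2/2)\sum_i\Ex{X_i^2})$. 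Choosing $t=\gamma/\sum_i\Ex{X_i^2}$, the minimizer of the exponent, delivers the claimed bound.

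For the upper tail, write $Y_i=X_i-\Ex{X_i}$, so $\Ex{Y_i}=0$, $\Ex{Y_i^2}=\sigma_i^2$, and $Y_i\le\Delta$. I would introduce the auxiliary function $\phi(x)=(e^x-1-x)/x^2$, extended by $\phi(0)=1/2$, which has the power series $\phi(x)=\sum_{j\ge 0} x^j/(j+2)!$ and is non-negative and strictly increasing on $\mathbb{R}$ (the latter is immediate from the integral representation $\phi(x)=\int_0^1\int_0^u e^{xv}\,dv\,du$). Since $tY_i\le t\Delta$ and $\phi$ is increasing, $\phi(tY_i)\le\phi(t\Delta)$; combined with $(tY_i)^2\ge 0$, the identity $e^{tY_i}=1+tY_i+(tY_i)^2\phi(tY_i)$ becomes the pointwise upper bound $e^{tY_i}\le 1+tY_i+t^2 Y_i^2\phi(t\Delta)$. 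Taking expectations and using $1+y\le e^y$ gives $\Ex{e^{tY_i}}\le\exp(t^2\sigma_i^2\phi(t\Delta))$. A termwise comparison of $\phi(x)=\sum_{j\ge 0} x^j/(j+2)!$ against the geometric series $(1/2)\sum_{j\ge 0}(x/3)^j$ yields the classical bound $\phi(x)\le 1/(2(1-x/3))$ valid for $0\le x<3$.

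Combining over $i$ by independence and applying Markov's inequality to $e^{tX}$ gives
\[\PPr{X\ge\Ex{X}+\gamma}\le\exp\!\left(-t\gamma+\frac{t^2\sum_i\sigma_i^2}{2(1-t\Delta/3)}\right)\]
for every $t\in(0,3/\Delta)$. The remaining step, and the only place nontrivial algebra enters, is to optimize over $t$: the standard choice $t=\gamma/(\sum_i\sigma_i^2+\gamma\Delta/3)$ lies in $(0,3/\Delta)$, and substituting it reduces the exponent, after cancellation, to exactly $-\gamma^2/(2\sum_i\sigma_i^2+2\gamma\Delta/3)$, matching the claimed bound.
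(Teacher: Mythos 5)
Your proof is correct: both tails follow the standard Chernoff moment-generating-function argument, with the elementary bound $e^{-u}\le 1-u+u^2/2$ (valid for $u\ge 0$, using $X_i\ge 0$) giving the lower tail, and the increasing function $(e^x-1-x)/x^2$, the geometric-series estimate $\le 1/(2(1-x/3))$, and the choice $t=\gamma/(\sum_i\sigma_i^2+\gamma\Delta/3)$ giving exactly the stated upper-tail exponent after the algebra you describe. The paper states this Bernstein-type inequality as a known preliminary without proof, so there is no in-paper argument to compare against; your derivation is the standard one and is complete.
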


\paragraph{Meyerson sketch.} 
We briefly review the Meyerson sketch~\cite{Meyerson01} and the relevant properties that we need from the Meyerson sketch. 
The Meyerson sketch provides an $(\alpha,\beta)$-approximation to $(k,z)$-clustering on a data stream of points $x_1,\ldots,x_n\in[\Delta]^d$ with $\alpha=2^{z+7}$ and $\beta=\O{2^{2z}\log n\log\Delta}$. 
Moreover, for our purposes, it provides the crucial property that on the arrival of each point $x_i$, the algorithm irrevocably assigns $x_i$ to one of the $\beta k$ centers. 
Specifically, the clustering cost at the end of the stream is computed with respect to the center that $x_i$ is assigned at time $i$, which may not be the closest center to $x_i$ because the closer center can be opened at a later time. 

\begin{algorithm}[!htb]
\caption{High probability $\Meyerson(X,\widetilde{\OPT},\alpha,\delta,\Delta,z,k)$ sketch}
\alglab{alg:meyerson}
\begin{algorithmic}[1]
\Require{Points $X:=x_1,\ldots,x_n\in\mathbb{R}^d$ with aspect ratio $\Delta$, estimate $\widetilde{\OPT}\ge0$ such that $\alpha\OPT\le\widetilde{\OPT}\le\OPT$ for some $\alpha\in(0,1)$, failure probability $\delta\in(0,1)$}
\Ensure{A coreset for $k$-clustering on $X$}
\State{$\gamma\gets2\log\frac{1}{\delta}$}
\For{$i\in[\gamma]$}
\For{$t\in[n]$}
\If{$t=1$}
\State{$M_i\gets x_1$, $C_{\mu_i}\gets 0$, $w_i(x_1)=1$}
\Else
\If{$|M_i|\le4k(1+\log\Delta)\left(\frac{2^{z+3}}{\alpha^z}+1\right)$}
\State{With probability $\min\left(\frac{k(1+\log\Delta)\dist(x_t,M_i)^z}{\widetilde{\OPT}},1\right)$, add $x_t$ to $M_i$ with weight $1$, i.e., $w_i(x_t)=1$}
\State{Otherwise, let $z=\argmin_{y\in M_i}\dist(x_t,y)$, increment the weight of $z$, i.e., $w_i(z)\gets w_i(z)+1$, and increase $C_{\mu_i}\gets C{\mu_i}\dist(x_t,z)^p$}
\EndIf
\EndIf
\EndFor
\EndFor
\State{Let $j=\argmin_{i:|M_i|\le4k(1+\log\Delta)\left(\frac{2^{z+3}}{\alpha^z}+1\right)} C_{\mu_i}$ be the index of the minimal cost sketch with at most $4k(1+\log\Delta)\left(\frac{2^{z+3}}{\alpha^z}+1\right)$ samples}
\Comment{Return FAIL if such $j$ does not exist}
\State{\Return $\cup_{i\in[\gamma]} M_i$, $w_j$, and $C_{\mu_j}$}
\end{algorithmic}
\end{algorithm}

\begin{algorithm}[!htb]
\caption{High probability $\MultMeyerson$ sketch}
\alglab{alg:mult:meyerson}
\begin{algorithmic}[1]
\Require{Points $X:=x_1,\ldots,x_n\in\mathbb{R}^d$ with aspect ratio $\Delta$, estimate $\widetilde{\OPT}\ge0$ such that $\alpha\OPT\le\widetilde{\OPT}\le\OPT$ for some $\alpha\in(0,1)$, failure probability $\delta\in(0,1)$}
\Ensure{A coreset for $k$-means clustering on $X$ if $\widetilde{\OPT}$ upper bounds the cost of the optimal clustering}
\State{$\gamma\gets\log nd(\Delta^z)$}
\For{$i\in[\gamma]$}
\State{Run $\Meyerson\left(X,2^i,\alpha=\frac{1}{2},\delta,\Delta,z,k\right)$ in parallel}
\EndFor
\State{Let $j$ be the minimal index in $[\gamma]$ such that $\Meyerson$ with input $2^j$ has size smaller than $8k\log\frac{1}{\delta}\left(1+\log\Delta\right)\left(2^{2z+3}+1\right)$ and cost smaller than $2^{z+6+j}$}
\State{\Return the output for $\Meyerson\left(X,2^j,\alpha=\frac{1}{2},\delta,\Delta,z,k\right)$}
\end{algorithmic}
\end{algorithm}

For the ease of discussion, we describe the Meyerson sketch for $z=1$; the intuition generalizes naturally to other values of $z$. 
The Meyerson sketch performs via a guess-and-double approach, where it first attempts to guess the cost of the optimal clustering cost. 
Using the guess of the cost, it then turns each point into a center with probability proportional to the distance of that point from the existing centers. 
This subroutine is illustrated in \algref{alg:meyerson}. 
If too many centers have been opened, then the Meyerson sketch determines that the guess for the optimal clustering cost must have been too low and increases the guess. 
The overall algorithm is given in \algref{alg:mult:meyerson}. 

We require the following properties from the Meyerson sketch.
\begin{theorem}
\thmlab{thm:bicrit:props}
\cite{BorassiELVZ20}
Given an input stream $x_1,\ldots,x_n\in\mathbb{R}^d$ defining a set $X\subset[\Delta]^d$, there exists an online algorithm $\MultMeyerson$ that with probability at least $1-\frac{1}{\poly(n)}$:
\begin{enumerate}
\item
on the arrival of each point $x_i$, assigns $x_i$ to a center in $C$ through a mapping $\pi:X\to C$, where $C$ contains at most $\O{2^{2z}k\log n\log\Delta}$ centers
\item
$\sum_{x\in X}\|x_i-\pi(x_i)\|_2^z\le2^{z+7}\Cost_{|S|\le k}(X,S)$
\item
$\MultMeyerson$ uses $\O{2^zk\log^3(nd\Delta)}$ words of space
\end{enumerate}
\end{theorem}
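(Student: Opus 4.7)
The plan is to analyze the inner subroutine $\Meyerson$ (\algref{alg:meyerson}) for a single guess $\widetilde{\OPT}$ of the optimal cost, then combine the parallel boosting copies, and finally use the guess-and-double wrapper $\MultMeyerson$ (\algref{alg:mult:meyerson}) to remove the dependence on knowing $\OPT$.

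First I would fix one instance $M_i$ run with guess $\widetilde{\OPT}$ satisfying $\alpha \OPT \le \widetilde{\OPT} \le \OPT$ and invoke the standard Meyerson online facility-location analysis for $(k,z)$-clustering. Fix an optimal clustering $C^\star$ with at most $k$ centers, and for each optimal cluster $X^\star_j$ consider the points in order of arrival. A point $x_t$ opens a new center with probability $\min\bigl(k(1+\log\Delta)\dist(x_t,M_i)^z/\widetilde{\OPT},1\bigr)$; otherwise it is charged to its closest already-open center. Splitting the contribution of each cluster into a geometric sequence of distance scales around its optimal center, one shows by a standard amortized argument that (i) the expected number of new centers opened across all clusters is $O(k\log\Delta \cdot 2^{z+3}/\alpha^z)$, and (ii) the expected total assignment cost is at most $2^{z+7}\OPT$, where the constants come from using the triangle inequality in its generalized form $\dist(a,c)^z \le 2^{z-1}(\dist(a,b)^z+\dist(b,c)^z)$ to compare the online assignment against the optimal assignment. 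The key invariant is that once the cluster has opened a ``good'' center close to the optimal center, subsequent arrivals in that cluster are cheap in expectation.

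Next I would boost the success probability. By Markov's inequality applied to each of the two expected bounds, each of the $\gamma = 2\log(1/\delta)$ parallel copies simultaneously satisfies the size threshold $4k(1+\log\Delta)(2^{z+3}/\alpha^z+1)$ and a cost threshold of $O(\OPT)$ with constant probability. Since the $\gamma$ copies are independent, the probability that none of them satisfies both bounds is at most $\delta$, so selecting the minimum-cost surviving copy gives the stated guarantees with probability $1-\delta$. To handle an unknown $\OPT$, I would analyze $\MultMeyerson$: since $\OPT$ lies in $[1, n\Delta^z]$, at least one of the $\log(nd\Delta^z)$ geometric guesses $2^j$ satisfies $\OPT \le 2^j \le 2\OPT$, and for this guess the $\Meyerson$ instance with $\alpha = 1/2$ satisfies both the size and cost thresholds defining $j$. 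A union bound over all guesses and parallel copies, together with the choice $\delta = 1/\poly(n)$, yields the overall $1-1/\poly(n)$ success probability.

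The online irrevocable-assignment property is immediate from the algorithm description: the ``else'' branch assigns $x_t$ to $\argmin_{y\in M_i}\dist(x_t,y)$ at arrival time and no subsequent reassignment occurs, so the mapping $\pi:X\to C$ is determined at the moment of arrival and the stated cost inequality is exactly the cost accumulated in $C_{\mu_j}$. For the space bound, each $\Meyerson$ copy stores at most $O(2^{2z} k\log n\log\Delta)$ weighted points, and $\MultMeyerson$ runs $O(\log(nd\Delta^z))$ guesses with $O(\log n)$ boosting copies each, giving the claimed $O(2^z k\log^3(nd\Delta))$ words of space after absorbing the $2^z$ factor into the $k$-dependence.

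The main obstacle will be the Meyerson-style expectation analysis for general $z$: the geometric decomposition of each optimal cluster and the charging of the online assignment cost against the optimal cost must be done carefully to get the correct $2^{z+7}$ constant and the correct $2^{2z}$ factor in the number of centers, and one must verify that the event ``size exceeds $4k(1+\log\Delta)(2^{z+3}/\alpha^z+1)$'' and the event ``cost exceeds $2^{z+7}\OPT$'' each occur with probability bounded away from $1$ so that the Markov-based boosting succeeds. The remaining work is bookkeeping the union bounds and the logarithmic factors introduced by guess-and-double.
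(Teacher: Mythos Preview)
The paper does not prove this statement: it is quoted verbatim as a result of \cite{BorassiELVZ20}, with only the pseudocode of \algref{alg:meyerson} and \algref{alg:mult:meyerson} reproduced for context. Your proposal is a faithful reconstruction of the standard Meyerson analysis that underlies that citation---the per-cluster geometric scale decomposition, Markov-based boosting over $\gamma$ independent copies, and guess-and-double over $\log(nd\Delta^z)$ cost levels---and is correct in outline; just be aware that you are supplying a proof the present paper deliberately omits, so there is nothing here to compare against beyond the cited reference.
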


\section{Online \texorpdfstring{$(1+\eps)$}{1+eps}-Coreset}
\seclab{sec:online:coreset}
In this section, we describe how to construction an online $(1+\eps)$ coreset for $(k,z)$-clustering under general discrete metrics. 
We first describe the offline coreset construction of \cite{Cohen-AddadSS21} and then argue that the construction can be adapted to an online setting at the cost of logarithmic overheads, which suffice for our purpose. 

Let $\calA$ be an $(\alpha,\beta)$-approximation for a $(k,z)$-clustering on an input set $X\subseteq[\Delta]^d$ and let $C_1,\ldots,C_{\beta k}$ be the clusters of $X$ induced by $\calA$. 
Suppose the points of $X$ arrive in a data stream $S$. 
For a fixed $\eps>0$, \cite{Cohen-AddadSS21} define the following notions of rings and groups:
\begin{itemize}
\item 
The average cost of cluster $C_i$ is denoted by $\kappa_{C_i}:=\frac{\Cost(C_i,\calA)}{|C_i|}$.
\item 
For any $i,j$, the ring $R_{i,j}$ is the set of points $x\in C_i$ such that
\[2^j\kappa_{C_i}\le\Cost(x,\calA)<2^{j+1}\kappa_{C_i}.\]
For any $j$, $R_j=\cup R_{i,j}$. 
\item 
The inner ring $R_I(C_i)=\cup_{j\le 2z\log\frac{\eps}{z}} R_{i,j}$ is the set of points of $C_i$ with cost at most $\left(\frac{\eps}{z}\right)^{2z}\kappa_{C_i}$. 
More generally for a solution $\calS$, let $R_I^{\calS}$ denote the union of the inner rings induced by $\calS$. 
\item 
The outer ring $R_O(C_i)=\cup_{j\ge 2z\log\frac{z}{\eps}} R_{i,j}$ is the set of points of $C_i$ with cost at least $\left(\frac{z}{\eps}\right)^{2z}\kappa_{C_i}$. 
More generally for a solution $\calS$, let $R_O^{\calS}$ denote the union of the outer rings induced by $\calS$. 
\item
The main ring $R_M(C_i)$ is the set of points of $C_i$ that are not in the inner or outer rings, i.e., $R_M(C_i)=C_i\setminus(R_I(C_i)\cup R_O(C_i)$. 
\item
For any $j$, the group $G_{j,b}$ consists of the $(2^{b-1}+1)$-th to $(2^b)$-th points of each ring $R_{i,j}$ that arrive in $S$. 
\item
For any $j$, we use $G_{j,\min}$ to denote the union of the groups with the smallest costs, i.e., 
\[G_{j,\min}=\left\{x|\exists i, x\in R_{i,j}, \Cost(R_{i,j},\calA)<2\left(\frac{\eps}{4z}\right)^z\frac{\Cost(R_j,\calA)}{\beta k}\right\}.\]
\item
The outer groups $G_b^O$ partition the outer rings $R_O^{\calA}$ so that
\[G_b^O=\left\{x|\exists i, x\in C_i, \left(\frac{\eps}{4z}\right)^z\frac{\Cost(R_O^{\calA},\calA)}{\beta k}\cdot 2^b\le\Cost(R_O(C_i),\calA)<\left(\frac{\eps}{4z}\right)^z\frac{\Cost(R_O^{\calA},\calA)}{\beta k}\cdot 2^{b+1}\right\}.\]
\item
We define $G^O_{\min}=\cup_{b\le 0} G_b^O$ and $G^O_{\max}=\cup_{b\ge z\log\frac{4z}{\eps}} G_b^O$. 
\end{itemize}
We remark that unlike the definition of \cite{Cohen-AddadSS21}, $G_{j,\min}$ is a subset of the groups $G_{j,b}$ with $b\ge 1$, but we shall nevertheless show that our sampling procedure preserves the cost contributed by each group. 
We also require the following slight variation of the definition of $\calA$-approximate centroid set from \cite{Matousek00} due to \cite{Cohen-AddadSS21}. 
\begin{definition}[$\calA$-approximate centroid set]
Let $X\subseteq\mathbb{R}^d$ be a set of points, let $k,z$ be two positive integers, and let $\eps>0$ be an accuracy parameter. 
Given a set $\calA$ of centers, we say a set $\mathbb{C}$ is an $\calA$-approximate centroid set for $(k,z)$-clustering on $X$ if for every set of $k$ centers $\calS\subseteq\mathbb{R}^d$, there exists $\widetilde{\calS}\subseteq\mathbb{R}^d$ of $k$ points such that for all $x\in X$ with $\Cost(x,\calS)\le\left(\frac{8z}{\eps}\right)^z\Cost(x,\calA)$ or $\Cost(x,\widetilde{\calS})\le\left(\frac{8z}{\eps}\right)^z\Cost(x,\calA)$,
\[\lvert\Cost(x,\calS)-\Cost(x,\widetilde{\calS})\rvert\le\frac{\eps}{z\log(z/\eps)}(\Cost(x,\calS)-\Cost(x,\calA).\]
\end{definition}
\noindent
The following statement is implied by the proof of Theorem 1 in \cite{Cohen-AddadSS21}. 
\begin{theorem}
\cite{Cohen-AddadSS21,Cohen-Addad22}
\thmlab{thm:kmeans:offline:coreset}
Let $z>0$ be a constant. 
Let $x\in G$ for a fixed group $G$ induced by an $(\alpha,\beta)$-bicriteria assignment $\calA$. 
For each cluster $C_i$ with $i\in[\beta k]$, let $D_i=C_i\cap G$. 
Let $\mathbb{C}$ be an $\calA$-approximate centroid set for $G$ and let
\[\gamma=\frac{C\max(\alpha^2,\alpha^z)\beta}{\min(\eps^2,\eps^z)}\log^2\frac{1}{\eps}\left(k\log|\mathbb{C}|+\log\log\frac{1}{\eps}+\log n\right)\log^2\frac{1}{\eps},\]
for some sufficiently large constant $C>0$. 
Let 
\[\zeta_x=\frac{\Cost(D_i,\calA)}{|D_i|\Cost(G,\calA)}\cdot\gamma\log n,\qquad\eta_x=\frac{\Cost(x,\calA)}{\Cost(G,\calA)}\cdot\gamma\log n.\]
Suppose each point $x\in X$ is sampled and reweighted independently into a set $\Omega_0$ with probability $p_x$, where
\[p_x\ge\min(\zeta_x+\eta_x,1).\]
Let $\Omega_1=\Omega_0\setminus(R_I(C_i)\cup(C_i\cap\cup_j G_{j,\min})\cup(R_O(C_i)\cap G^O_{\min})$. 

Suppose $\Omega_2$ is the set of centers in $\calA$, where each center $c_i$ with $i\in[\beta k]$ has weight $w_i$, where $w_i$ is a $(1+\eps)$-approximation to $|R_I(C_i)|+|C_i\cap\cup_j G_{j,\min}|+|R_O(C_i)\cap G^O_{\min}|$. 
Then $(\Omega_1\setminus\Omega_2)\cup\Omega_2$ is $(1+\eps)$-coreset for the $(k,z)$-clustering problem with probability $1-\frac{1}{\poly(n)}$. 
\end{theorem}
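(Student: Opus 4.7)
The plan is to adapt the coreset proof of \cite{Cohen-AddadSS21} to the slightly more permissive sampling scheme stated here, in which each point is kept independently with probability $p_x \ge \min(\zeta_x+\eta_x, 1)$ rather than sampled according to the exact group-wise distribution. The key structural observation is that independent Bernoulli sampling with inflated probabilities only decreases the variance of the natural Horvitz--Thompson estimator, so the concentration bounds that underlie their analysis still apply verbatim after a routine monotonicity argument.

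First, I would fix a candidate center set $\calS$ and decompose
\[\Cost(X,\calS) = \sum_i \Cost(R_I(C_i),\calS) + \sum_i\Cost(R_M(C_i),\calS) + \sum_i\Cost(R_O(C_i),\calS),\]
further subdividing the main and outer rings into the groups $G_{j,b}$ and $G_b^O$. The points handled by $\Omega_2$ (namely the inner rings, together with $C_i \cap \bigcup_j G_{j,\min}$ and $R_O(C_i) \cap G^O_{\min}$) are handled by replacing each such point with its assigned center in $\calA$ and weighting $c_i$ appropriately. Using the generalized triangle inequality, the cost of any point in $R_I(C_i)$ with respect to $\calS$ differs from $\Cost(c_i,\calS)$ by at most a $(\varepsilon/z)$-fraction of $\Cost(x,\calS)+\Cost(x,\calA)$, and hence replacing each such point by $c_i$ costs at most $\varepsilon \cdot \OPT$ globally; the $(1+\varepsilon)$-approximation of the weights $w_i$ folds in the same way. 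The small groups $G_{j,\min}$ and $G^O_{\min}$ are absorbed analogously since by definition their total $\calA$-cost is dominated by the remaining groups at the same ring-level, so even the crude bound of replacing them by $c_i$ adds only an $\varepsilon$-fraction of $\Cost(R_j,\calA)$ and $\Cost(R_O^\calA,\calA)$ respectively.

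Second, for each remaining group $G$ (a main-ring group $G_{j,b}$ or a large outer group $G_b^O$) I would analyze the unbiased estimator
\[Y_\calS^G \;=\; \sum_{x\in G} \frac{\mathbf{1}[x\in\Omega_0]}{p_x}\,\Cost(x,\calS),\qquad \Ex{Y_\calS^G} = \Cost(G,\calS),\]
and apply Bernstein's inequality (\thmref{thm:bernstein:conc}). The definition of $\zeta_x+\eta_x$ ensures that for any $\calS$ whose point-wise cost is bounded by $(8z/\varepsilon)^z\Cost(x,\calA)$, the summand $\Cost(x,\calS)/p_x$ is at most $O(\Cost(G,\calA)/(\gamma \log n))$ and the variance contribution is at most $O(\Cost(G,\calA)\Cost(G,\calS)/(\gamma\log n))$; inflating $p_x$ only improves both quantities. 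Thus with probability $1-1/\poly(n|\mathbb{C}|)$, $Y_\calS^G$ deviates from its mean by at most $\varepsilon\cdot\Cost(G,\calA)/\polylog(1/\varepsilon)$, which sums over all groups to at most $\varepsilon\cdot\OPT$.

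Finally, I would lift this per-$\calS$ guarantee to all $\calS\subseteq\mathbb{R}^d$ via the $\calA$-approximate centroid set. For an arbitrary $\calS$, the centroid set gives $\widetilde\calS\in\mathbb{C}$ whose per-point cost on the relevant ``bounded-ratio'' points matches $\calS$ up to an $\frac{\varepsilon}{z\log(z/\varepsilon)}(\Cost(x,\calS)-\Cost(x,\calA))$ error, while points with $\Cost(x,\calS)$ hugely larger than $\Cost(x,\calA)$ can be handled by a bucket-by-bucket argument exactly as in \cite{Cohen-AddadSS21}. Union bounding over the $|\mathbb{C}|$ elements of the centroid set then costs only a $\log|\mathbb{C}|$ factor, which is already accounted for in $\gamma$. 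The main obstacle I anticipate is the variance calculation: I must verify that replacing the original group-wise sampling of \cite{Cohen-AddadSS21} by independent Bernoullis with $p_x \ge \zeta_x+\eta_x$ still yields the bounds $\max_x \Cost(x,\calS)/p_x \le O(1)\cdot \Cost(G,\calA)/(\gamma \log n)$ and $\sum_{x\in G} p_x(1-p_x)(\Cost(x,\calS)/p_x)^2$ small enough to plug into Bernstein; once that monotonicity in $p_x$ is established, the rest of the argument tracks the offline proof essentially line by line.
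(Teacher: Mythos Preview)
Your proposal is correct and follows essentially the same approach as the paper's own treatment in the appendix: both argue that inflating the sampling probabilities from the \cite{Cohen-AddadSS21} scheme to independent Bernoullis with $p_x\ge\zeta_x+\eta_x$ can only decrease the variance terms in the Bernstein applications, then handle the $(\alpha,\beta)$ case by the $\beta$-oversampling already built into $\gamma$, and finally union bound via the $\calA$-approximate centroid set. The only presentational difference is that the paper keeps the two components of \cite{Cohen-AddadSS21} separate---adapting \textsc{GroupSample} (with its tiny/interesting/huge partition) and \textsc{SensitivitySample} (with its close/far partition) individually---whereas you fold both into a single Horvitz--Thompson estimator governed by $\zeta_x+\eta_x$; the underlying inequalities and the monotonicity-in-$p_x$ observation are the same.
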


\begin{algorithm}[!htb]
\caption{$\RingSample$}
\alglab{alg:rel:sample}
\begin{algorithmic}[1]
\Require{Points $x_1,\ldots,x_n\in[\Delta]^d$}
\Ensure{A set $W$ of weighted points and timestamps}
\State{Initiate an instance of $(\alpha,\beta)$-bicriteria algorithm $\MultMeyerson$}
\State{$\gamma\gets\frac{C\max(\alpha^2,\alpha^z)\beta}{\min(\eps^2,\eps^z)}\log^2\frac{1}{\eps}\left(k\log|\mathbb{C}|+\log\log\frac{1}{\eps}+\log n\right)\log^2\frac{1}{\eps}$}
\State{$W\gets\emptyset$}
\For{each point $x_t$, $t\in[n]$}
\State{Let $c_i$ be the center assigned for $x_t$ by $\MultMeyerson$}
\State{Let $2^j\le\|x_t-c_i\|_2^z<2^{j+1}$ for $j\in\mathbb{Z}$}
\State{Let $b\in\mathbb{Z}$ so that the number of points in $R_{i,j}$ is between $2^{b-1}+1$ and $2^b$}
\State{Let $r_t$ be the number of points in $G_{j,b}$ at time $t$}
\State{$p_x\gets\min\left(\frac{4}{r_t}\cdot\gamma\log n,1\right)$}
\State{With probability $p_x$, add $x$ to $W$ with timestamp $t$ and weight $\frac{1}{p_x}$}
\EndFor
\State{\Return $W$}
\end{algorithmic}
\end{algorithm}

We first show that the sampling probabilities for each point in the stream by $\RingSample$ in \algref{alg:rel:sample} satisfies the conditions of \thmref{thm:kmeans:offline:coreset}. 
\begin{lemma}
\lemlab{lem:alg:samp:probs}
Let $x\in G$ for a fixed group $G$ induced by an $(\alpha,\beta)$-bicriteria assignment $\calA$ at a time $t$, with $t\in[n]$.  
For each cluster $C_i$ with $i\in[\beta k]$, let $D_i=C_i\cap G$. 
Let $\mathbb{C}$ be an $\calA$-approximate centroid set for $G$ and let
\[\gamma=\frac{C\max(\alpha^2,\alpha^z)\beta}{\min(\eps^2,\eps^z)}\log^2\frac{1}{\eps}\left(k\log|\mathbb{C}|+\log\log\frac{1}{\eps}+\log n\right)\log^2\frac{1}{\eps},\]
for some sufficiently large constant $C>0$
Let 
\[\zeta_x=\frac{\Cost(D_i,\calA)}{|D_i|\Cost(G,\calA)}\cdot\gamma\log n,\qquad\eta_x=\frac{\Cost(x,\calA)}{\Cost(G,\calA)}\cdot\gamma\log n.\]
Then the probability $p_x$ that $\RingSample$ (\algref{alg:rel:sample}) samples each point $x$ satisfies
\[p_x\ge\min(\zeta_x+\eta_x,1).\]
\end{lemma}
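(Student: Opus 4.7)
The plan is to verify the inequality $p_x \ge \min(\zeta_x+\eta_x,1)$ by direct comparison to the sampling probability $p_x=\min\!\bigl(\tfrac{4}{r_t}\cdot\gamma\log n,\,1\bigr)$ that $\RingSample$ assigns to $x$. If $\tfrac{4}{r_t}\gamma\log n\ge 1$, then $p_x=1$ and the bound is immediate, so I would assume $p_x=\tfrac{4}{r_t}\gamma\log n$; it then suffices to prove
\[
\frac{\Cost(D_i,\calA)}{|D_i|\,\Cost(G,\calA)}+\frac{\Cost(x,\calA)}{\Cost(G,\calA)} \;\le\; \frac{4}{r_t}.
\]

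The main tool is the ring bracket that both the definition of $R_{i,j}$ and the variable $j$ inside $\RingSample$ enforce. Since $\MultMeyerson$ assigns $x$ to center $c_i\in\calA$ and $x$ falls into the ring indexed by $j$, every $y\in R_{i,j}$ satisfies $2^{j}\kappa_{C_i}\le\Cost(y,\calA)<2^{j+1}\kappa_{C_i}$. I would use the upper end of this bracket twice to bound $\Cost(x,\calA)<2^{j+1}\kappa_{C_i}$ and the average $\tfrac{\Cost(D_i,\calA)}{|D_i|}\le 2^{j+1}\kappa_{C_i}$; and I would use the lower end together with the containment $D_i\subseteq G$ to bound $\Cost(G,\calA)\ge \Cost(D_i,\calA)\ge |D_i|\cdot 2^{j}\kappa_{C_i}$. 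Plugging these into the displayed target, the $\kappa_{C_i}$ factors cancel and the numerator becomes $2\cdot 2^{j+1}\kappa_{C_i}=2^{j+2}\kappa_{C_i}$, giving $\zeta_x+\eta_x\le \tfrac{4\gamma\log n}{|D_i|}$.

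Finally, I would match $|D_i|$ to the algorithm's counter $r_t$. The bucket index $b$ in $\RingSample$ is chosen so that the running count of points in $R_{i,j}$ lies in $(2^{b-1},2^{b}]$, so the group $G$ that $x_t$ joins at time $t$ consists exactly of those arrivals from cluster $C_i$ that landed in the ring $R_{i,j}$ within the $b$-th arrival bracket; that is, $G=D_i$ as subsets of the stream seen by time $t$, and hence $|D_i|=r_t$. Substituting yields $\zeta_x+\eta_x\le \tfrac{4\gamma\log n}{r_t}=p_x$, as required.

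The chief obstacle here is purely bookkeeping: one must carefully align the three notions of ``group'' appearing in the excerpt (the definition $G_{j,b}$, the set $G$ appearing in \thmref{thm:kmeans:offline:coreset}, and the implicit bucket whose size is tracked by $r_t$ in $\RingSample$) so that the identification $D_i=C_i\cap G$ coincides with the bucket counted by $r_t$. Once this identification is pinned down, no probabilistic or geometric machinery is required; the inequality is a one-line consequence of the ring cost bracket and the inclusion $D_i\subseteq G$.
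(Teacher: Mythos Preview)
Your argument has two genuine gaps.

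\textbf{The identification $G=D_i$ is false.} By the paper's definition, the group $G_{j,b}$ collects the $(2^{b-1}{+}1)$-th through $2^b$-th arrivals of \emph{each} ring $R_{i',j}$, across all clusters $i'\in[\beta k]$; correspondingly, the counter $r$ in \algref{alg:rel:sample} counts all of $G_{j,b}$, not just the piece from cluster $C_i$. Hence $D_i=C_i\cap G$ is in general a strict subset of $G$, and $|D_i|\le r$. Your lower bound $\Cost(G,\calA)\ge |D_i|\cdot 2^{j}\kappa_{C_i}$ only accounts for one cluster's contribution and leads to $\zeta_x+\eta_x\le 4\gamma\log n/|D_i|$. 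Because $|D_i|\le r$, this upper bound can exceed $p_x=4\gamma\log n/r$, so it does not yield $p_x\ge\zeta_x+\eta_x$. The paper instead lower bounds $\Cost(G_{j,b},\calA)$ by $r\cdot 2^{j}$, using that \emph{every} point of $G_{j,b}$ (from whichever cluster) contributes at least $2^{j}$; this is what makes $r$, not $|D_i|$, appear in the denominator.

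\textbf{The two times are conflated.} In the lemma, $t$ is an arbitrary time at which the group $G$ and the quantities $\zeta_x,\eta_x$ are evaluated, while the algorithm fixes $p_x$ at the \emph{arrival} time $u\le t$ of $x$ using the counter $r_u$. You use a single symbol $r_t$ for both roles. The paper's proof explicitly introduces $u$ and uses the insertion-only monotonicity $\Cost(G_{j,b}^{(t)},\calA)\ge \Cost(G_{j,b}^{(u)},\calA)$ (together with the fact that $x$'s ring and group assignment are irrevocable) to pass from the time-$t$ denominators in $\zeta_x,\eta_x$ back to the time-$u$ quantity $r_u$ that determines $p_x$. Without this step, even a correct lower bound on $\Cost(G,\calA)$ at time $t$ would not connect to $p_x$.

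Once these two points are addressed, the remaining arithmetic is exactly the ring bracket you describe; up to the constant absorbed into $C$, the paper obtains $\eta_x\le \tfrac{2}{r_u}\gamma\log n$ and $\zeta_x\le \tfrac{4}{r_u}\gamma\log n$ and concludes.
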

\begin{proof}
Suppose that $x\in R_{i,j}$ and $x\in G_{j,b}$ at time $t$, for some $i\in[\beta k]$ in an assignment by $\calA$ from $\MultMeyerson$. 
Let $u$ be the time that $x$ arrived in the stream. 
By the properties of the Meyerson sketch, i.e., $\MultMeyerson$ in \thmref{thm:bicrit:props}, $x$ is irrevocably assigned to a cluster $C_i$ with $i\in[\beta k]$ at time $u$. 
Hence, $x$ must also be assigned to ring $R_{i,j}$ at time $u$. 
Moreover, since the stream is insertion-only, then the number of points in all rings $R_{i,j}$ for a fixed $j$ across all $i\in[\beta k]$ is monotonically non-decreasing. 
Thus $x$ must also be assigned to group $G_{j,b}$ at time $u$. 

Let $p_x$ be the sampling probability of $x$ by $\RingSample$ in \algref{alg:rel:sample} at time $u$. 
We have that
\[p_x=\min\left(\frac{4}{r_u}\cdot\gamma\log n,1\right),\]
where $r_u$ is the number of points in $G_{j,b}$ at time $u$. 
Let $G_{j,b}^{(u)}$ be the subset of $G_{j,b}$ that have arrived at time $u$ and let $G_{j,b}^{(t)}$ be the subset of $G_{j,b}$ that have arrived at time $t$. 
Let $c_i$ be the center assigned to point $x$, so that $\Cost(x,c_i)=\Cost(x,\calA)$ and let $C_i^{(u)}$ be the points assigned to $c_i$ at time $u$. 
Similarly, let $D_i^{(u)}=C_i^{(u)}\cap G_{j,b}^{(u)}$. 
By the definition of $R_{i,j}$ and $G_{j,b}$,
\[\frac{\|x-c_i\|_2^z}{\Cost(G_{j,b}^{(u)},\calA)}\le\frac{2^{j+1}}{\Cost(G_{j,b}^{(u)},\calA)}\le\frac{2^{j+1}}{r_u\cdot 2^j}=\frac{2}{r_u}.\]
Since both the cost of group $G_{j,b}$ and the number of points in $D_i$ is monotonically non-decreasing over time, then at time $t$, we have
\[\frac{\zeta_x}{\gamma\log n}=\frac{\Cost(D_i,\calA)}{|D_i|\Cost(G_{j,b},\calA)}\le\frac{2|D_i|\|x-c_i\|_2^z}{|D_i|\Cost(G_{j,b}^{(t)},\calA)}\le\frac{2\|x-c_i\|_2^2}{\Cost(G_{j,b}^{(u)},\calA)}\le\frac{4}{r_u}.\]
Similarly, we have that due to the monotonicity of the cost of group $G_{j,b}$ over time,
\[\frac{\eta_x}{\gamma\log n}=\frac{\|x-c_i\|_2^z}{\Cost(G_{j,b}^{(t)},\calA)}\le\frac{\|x-c_i\|_2^z}{\Cost(G_{j,b}^{(u)},\calA)}\le\frac{2}{r_u}.\]
Thus for sufficiently large constant $C$ in the definition of $\gamma$ in $\RingSample$, we have that
\[p_x\ge\min(\zeta_x+\eta_x,1),\]
since $p_x=\min\left(\frac{4}{r_u}\cdot\gamma\log n,1\right)$.
\end{proof}

We next justify the space complexity of \algref{alg:rel:sample}, i.e., showing that with high probability, an upper bound of the number of samples can be determined. 
\begin{lemma}
\lemlab{lem:relsample:space}
$\RingSample$ (\algref{alg:rel:sample}) samples \[\O{\frac{\max(\alpha^2,\alpha^z)\beta}{\min(\eps^2,\eps^z)}\log^2\frac{1}{\eps}\left(k\log|\mathbb{C}|+\log\log\frac{1}{\eps}+\log n\right)\log^2 n\log^2\Delta\log^2\frac{1}{\eps}}\]
points with high probability.
\end{lemma}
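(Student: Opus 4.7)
The plan is to first compute the expected number of sampled points group by group, then count the number of groups, and finally use concentration to convert the expectation to a high-probability bound.

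For the per-group expectation, note that each point $x$ that becomes the $r$-th point of $G_{j,b}$ is sampled independently with probability $\min\!\left(4\gamma\log n / r,\,1\right)$, since $r_t$ is just the running count of the group at the time of $x$'s arrival. Summing this harmonic-type series over $r = 1, 2, \ldots, |G_{j,b}|$ gives
\[
\E{}{\#\text{samples from } G_{j,b}}
\;\le\; \sum_{r=1}^{|G_{j,b}|}\min\!\left(\frac{4\gamma\log n}{r},\,1\right)
\;\le\; 4\gamma\log n + 4\gamma\log n \cdot \ln n
\;=\; \O{\gamma\log^2 n}.
\]

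Next, I would bound the number of distinct groups $(j,b)$ that can ever appear in the stream. Since all points and all centers produced by $\MultMeyerson$ lie in $[\Delta]^d$, the quantity $\|x_t - c_i\|_2^z$ takes values in $[1,(d\Delta^2)^{z/2}]$ whenever it is nonzero, so $j$ takes at most $\O{z\log(d\Delta)}$ distinct values; points with $x_t = c_i$ can be accounted for separately with a single extra ``zero'' bucket. Since each ring has at most $n$ points, $b\in\{1,\ldots,\lceil\log n\rceil\}$. Hence the number of distinct $(j,b)$ pairs is at most $\polylog(n,\Delta)$, and multiplying through gives total expected samples of $\O{\gamma\log^3 n\log\Delta}$, which is within the stated bound after substituting in $\gamma$ and the $\beta$ coming from \thmref{thm:bicrit:props}.

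To upgrade expectation to a high-probability bound, I would apply Bernstein's inequality (\thmref{thm:bernstein:conc}) to the sum $X = \sum_x X_x$ of independent indicators, each with variance $p_x(1-p_x)\le p_x$, so $\sum_x \sigma_x^2\le\E{}{X}$ and $\Delta_{\max}\le 1$. Taking $\gamma$ deviation equal to $\E{}{X}$ yields $X\le 2\,\E{}{X}$ with probability at least $1-\exp(-\Omega(\E{}{X}))$, and since $\E{}{X}=\Omega(\log n)$ from the $\log n$ factor baked into $\gamma$, this probability is $1-1/\poly(n)$. A union bound with the $1-1/\poly(n)$ success probability of $\MultMeyerson$ finishes the argument.

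The main obstacle I anticipate is the careful accounting of the group count $(j,b)$: the algorithm's definition of $j$ is in terms of the absolute distance to the assigned Meyerson center, which can be zero for points coinciding with their center and can in principle differ from the $\kappa_{C_i}$-relative definition used in the preliminaries. Handling these degenerate distances (by putting them in an inner bucket that ultimately contributes negligibly via the $\Omega_2$ correction in \thmref{thm:kmeans:offline:coreset}) and arguing that the number of meaningful $(j,b)$ pairs is still $\polylog(n,\Delta)$ is the only subtle step; the expected-sample calculation and the Bernstein concentration are routine.
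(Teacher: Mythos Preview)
Your proposal is correct and follows essentially the same approach as the paper: compute a harmonic-type bound $\O{\gamma\log^2 n}$ on the expected number of samples per group, multiply by the number of $(j,b)$ groups, and apply Bernstein's inequality for concentration. The only cosmetic differences are that the paper applies Bernstein per group (with a fixed threshold $80\gamma\log^2 n$) and then union-bounds, whereas you apply it once to the total sum; and your count of $(j,b)$ pairs gives $\O{\log n\cdot\log(d\Delta)}$ rather than the paper's asserted $\log^2\Delta$, leading to $\O{\gamma\log^3 n\log\Delta}$ instead of $\O{\gamma\log^2 n\log^2\Delta}$. Since the paper itself does not justify its $(j,b)$ count and the downstream theorems only need a $\polylog\frac{n\Delta}{\eps}$ factor anyway, this discrepancy is harmless, and your handling of the zero-distance corner case is in fact more careful than the paper's.
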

\begin{proof}
Recall that by definition, the groups $G_{j,b}$ partition the points $X=x_1,\ldots,x_n\subseteq[\Delta]^d$. 
For a fixed $j$ and $b$, let $Y_i$ be an indicator random variable for whether the $i$-th point of $G_{j,b}$ is sampled by $\RingSample$. 
Then we have $\Ex{Y_i}\le\frac{4}{i}\cdot\gamma\log n$ and similarly $\Ex{Y_i^2}\le\frac{4}{i}\cdot\gamma\log n$. 
By Bernstein's inequality, \thmref{thm:bernstein:conc}, we have that 
\[\PPr{\sum Y_i\ge 80\gamma\log^2 n}\le\frac{1}{n^4}\]
and more generally, we have that $\sum Y_i=\O{\gamma\log^2 n}$ with high probability. 
Thus by a union bound over all $j$ and $b$, we have that the number of sampled points is at most 
\[\O{\gamma\log^2 n\log^2\Delta}=\O{\frac{1}{\min(\eps^2,\eps^z)}\log^2\frac{1}{\eps}\left(k\log|\mathbb{C}|+\log\log\frac{1}{\eps}+\log n\right)\log^2 n\log^2\Delta\log^2\frac{1}{\eps}}\]
for $\gamma=\frac{C\max(\alpha^2,\alpha^z)\beta}{\min(\eps^2,\eps^z)}\log^2\frac{1}{\eps}\left(k\log|\mathbb{C}|+\log\log\frac{1}{\eps}+\log n\right)\log^2\frac{1}{\eps}$. 
\end{proof}


Moreover, note that we can for all $t\in[n]$, we can explicitly track both $|G_{j,b}^{(t)}|$ and $\Cost(G_{j,b}^{(t)},\calA)$ as the stream is updated, because once the bicriteria algorithm assigns a point to a center in $\calA$, the assignment will remain the same for the rest of the stream. 
Thus, we have the following:
\begin{lemma}
\lemlab{lem:rings:stats}
For each $j$ and $b$, there exists an algorithm that maintains both $|G_{j,b}^{(t)}|$ and $\Cost(G_{j,b}^{(t)},\calA)$ for all $t\in[n]$ using $\O{\log(nd\Delta)}$ space. 
\end{lemma}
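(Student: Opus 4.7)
The plan is to exploit the irrevocability of the $\MultMeyerson$ assignment guaranteed by \thmref{thm:bicrit:props}: once a point $x_t$ is processed, its assignment to a center $c_i \in \calA$ is fixed forever, so its ring index $j$ (determined by the dyadic bucket of $\|x_t-c_i\|_2^z$) and its group index $b$ (determined by its rank within the ring) are both decided at time $t$ and never change. This reduces the task to incrementing a pair of running scalars each time a new point lands in the group.

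Concretely, for each pair $(j,b)$ I would store two values: an integer counter $N_{j,b}$ initialized to $0$, and an accumulator $K_{j,b}$ initialized to $0$. Upon arrival of $x_t$, $\MultMeyerson$ supplies the assigned center $c_i$; computing $\|x_t-c_i\|_2^z$ picks out the unique $j$ with $2^j \le \|x_t-c_i\|_2^z < 2^{j+1}$ and hence the ring $R_{i,j}$. Because the stream is insertion-only and center assignments are never revoked, the rank of $x_t$ in $R_{i,j}$ is simply one plus the number of points previously assigned to $R_{i,j}$, which determines the unique $b$ with $2^{b-1} < \mathrm{rank}(x_t) \le 2^b$. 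At that point we increment $N_{j,b}$ by $1$ and add $\|x_t-c_i\|_2^z$ to $K_{j,b}$. By construction $N_{j,b}=|G_{j,b}^{(t)}|$ and $K_{j,b}=\Cost(G_{j,b}^{(t)},\calA)$ at every time $t$.

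For the space accounting, $N_{j,b}$ is an integer at most $n$ and so fits in $\O{\log n}$ bits. For $K_{j,b}$, every point lies in $[\Delta]^d$, so every distance is at most $\sqrt{d}\Delta$, each contribution is at most $d^{z/2}\Delta^z$, and the total never exceeds $n\cdot d^{z/2}\Delta^z$; for constant $z$ this is representable in $\O{\log(nd\Delta)}$ bits. The only non-routine point is that determining $b$ requires the per-ring rank, which is not literally part of the state of a single $(j,b)$ pair; however, this is an $\O{\log n}$-bit counter per $(i,j)$ maintained as a byproduct of the $\MultMeyerson$ bookkeeping of cluster $C_i$, and does not inflate the per-group space bound. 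Hence the per-pair space is $\O{\log(nd\Delta)}$, as claimed.
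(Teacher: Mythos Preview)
Your proposal is correct and follows exactly the approach the paper takes: the paper's justification is the single sentence preceding the lemma, namely that once $\MultMeyerson$ assigns a point to a center the assignment is permanent, so $|G_{j,b}^{(t)}|$ and $\Cost(G_{j,b}^{(t)},\calA)$ can be tracked explicitly as running counters. Your write-up simply fleshes this out with the explicit counter mechanics and the bit-counting for the space bound, which the paper omits.
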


Putting things together, we give the full guarantees of $\RingSample$ in \algref{alg:rel:sample}. 
\begin{lemma}
\lemlab{lem:relsample:full}
There exists an algorithm that samples 
\[\O{\frac{\max(\alpha^2,\alpha^z)\beta}{\min(\eps^2,\eps^z)}\log^2\frac{1}{\eps}\left(k\log|\mathbb{C}|+\log\log\frac{1}{\eps}+\log n\right)\log^2 n\log^2\Delta\log^2\frac{1}{\eps}}\]
points and with high probability, outputs a $(1+\eps)$-online coreset for the $k$-means clustering problem. 
\end{lemma}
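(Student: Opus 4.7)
The plan is to combine \lemref{lem:alg:samp:probs}, \lemref{lem:relsample:space}, \lemref{lem:rings:stats}, and \thmref{thm:kmeans:offline:coreset} to lift the offline coreset guarantee to hold simultaneously at every prefix $t\in[n]$ of the stream. The sample complexity bound is already handed to us by \lemref{lem:relsample:space}, so what really needs to be argued is correctness at every prefix. First, I would fix an arbitrary $t\in[n]$ and consider the underlying point set $X^{(t)}=\{x_1,\ldots,x_t\}$, together with the $(\alpha,\beta)$-bicriteria solution $\calA^{(t)}$ produced by $\MultMeyerson$ after processing the first $t$ stream updates. By the irrevocable-assignment property in \thmref{thm:bicrit:props}, the cluster each $x_u$ with $u\le t$ belongs to at time $t$ is exactly the cluster it was assigned to when it arrived, so the rings $R_{i,j}^{(t)}$ and the groups $G_{j,b}^{(t)}$ and $G_b^{O,(t)}$ are well defined, and the points inside them are exactly the points that were placed there upon arrival.

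Next, I would invoke \lemref{lem:alg:samp:probs}, which shows that for every point $x_u\in X^{(t)}$ the sampling probability $p_{x_u}$ used by $\RingSample$ at its arrival time is at least $\min(\zeta_{x_u}+\eta_{x_u},1)$ computed with respect to the prefix data $X^{(t)}$ and $\calA^{(t)}$; this is exactly the oversampling hypothesis of \thmref{thm:kmeans:offline:coreset}. The monotonicity argument already recorded in the proof of \lemref{lem:alg:samp:probs} — both $|D_i|$ and $\Cost(G_{j,b},\calA)$ are nondecreasing in time — is the reason the sampling probability chosen at the arrival time $u$ still dominates the required probability computed at the later time $t$. I would then apply \thmref{thm:kmeans:offline:coreset} to $(X^{(t)},\calA^{(t)})$ together with the subset $S\cap X^{(t)}$ of samples already drawn (with their original weights), provided we can also supply the weighted centers $\Omega_2$ that replace the inner rings $R_I(C_i)$, the smallest groups $C_i\cap\cup_j G_{j,\min}$, and $R_O(C_i)\cap G^O_{\min}$. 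This is where \lemref{lem:rings:stats} is used: the algorithm maintains, for every $j,b$, the quantities $|G_{j,b}^{(t)}|$ and $\Cost(G_{j,b}^{(t)},\calA^{(t)})$ in $\O{\log(nd\Delta)}$ additional space each, from which it can determine at time $t$ which groups currently qualify as $G_{j,\min}$ and $G^O_{\min}$, and thus compute the $(1+\eps)$-approximate counts $w_i$ required by \thmref{thm:kmeans:offline:coreset}. Combining the sampled set with these weighted bicriteria centers produces a $(1+\eps)$-coreset for $(k,z)$-clustering on $X^{(t)}$ with probability $1-1/\poly(n)$.

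Finally, I would take a union bound over the event of correctness at every $t\in[n]$ and over the $1-1/\poly(n)$ success event of $\MultMeyerson$ from \thmref{thm:bicrit:props}; this only reduces the total success probability by a $\poly(n)$ factor, which is absorbed by choosing the constant $C$ in $\gamma$ slightly larger so that \thmref{thm:kmeans:offline:coreset} fails with probability at most $1/n^c$ at each prefix. The number of sampled points is bounded as in \lemref{lem:relsample:space}, and the extra $\O{\log(nd\Delta)}$ counters per $(j,b)$ pair are absorbed into the polylogarithmic overhead. I expect the main conceptual obstacle to be the interplay between the ``once sampled, always kept'' guarantee and the requirement that the replaced sets $R_I$, $\cup_j G_{j,\min}$, and $G^O_{\min}$ change with $t$: the sampled set is fixed online, but the partition into ``sampled'' versus ``represented by the weighted center'' is reinterpreted at query time using the current $G_{j,\min}$ and $G^O_{\min}$. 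The right way to resolve this is to keep all samples as part of $\Omega_0$ (they are sampled with probability at least $\min(\zeta_x+\eta_x,1)$ irrespective of which regime they will fall in at query time $t$), and then at query time split them into $\Omega_1$ and the weighted $\Omega_2$ according to the current rings/groups computed from the counters of \lemref{lem:rings:stats}. With this interpretation, \thmref{thm:kmeans:offline:coreset} applies directly at every prefix, completing the proof.
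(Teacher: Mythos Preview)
Your proposal is correct and follows essentially the same approach as the paper: fix a prefix $t$, verify the sampling hypothesis of \thmref{thm:kmeans:offline:coreset} via \lemref{lem:alg:samp:probs} and supply the counts via \lemref{lem:rings:stats}, then union bound over $t\in[n]$ and invoke \lemref{lem:relsample:space} for the sample bound. Your write-up is in fact more explicit than the paper's on two points the paper glosses over---how the weighted-center set $\Omega_2$ is reconstructed at query time from the per-group counters, and why keeping all samples as $\Omega_0$ while reinterpreting the $\Omega_1/\Omega_2$ split at each $t$ is consistent with the hypothesis of \thmref{thm:kmeans:offline:coreset}---and these clarifications are sound.
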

\begin{proof}
Consider $\RingSample$. 
Before claiming the algorithm gives an $(1+\eps)$-online coreset, we first consider a fixed time $t\in[n]$. 
Then correctness at time $t$ follows from applying \thmref{thm:kmeans:offline:coreset}, given \lemref{lem:alg:samp:probs} and \lemref{lem:rings:stats}. 
We then observe that once a center is formed by $\RingSample$, i.e., once a point is sampled, then it irrevocably remains a center in the data structure. 
Therefore, conditioned on the correctness at time $t$, then the data structure will always correctly give an $(1+\eps)$-coreset to the prefix of $t$ points in the stream at any later point $t'$ in the stream, $t'\in[n]$ with $t'>t$. 
It thus suffices to argue correctness over all $t\in[n]$, which requires a simple union bound. 
The space complexity follows from \lemref{lem:relsample:space} and \lemref{lem:rings:stats}.  
\end{proof}

To apply \lemref{lem:relsample:full}, we require upper bounding the term $\log|\mathbb{C}|$. 
To that end, we first require the following definition of doubling dimension. 
\begin{definition}[Doubling dimension]
The doubling dimension of a metric space $X$ with metric $d$ is the smallest integer $\ell$ such that for any $x\in X$, it is possible to cover the ball of radius $2r$ around $x$ with $2^\ell$ balls of radius $r$. 
\end{definition}
Observe that general discrete metric spaces with $n$ points have doubling dimension $\O{\log n}$ since all points can be covered by $2^{\log n}$ balls. 

We then recall the following result that upper bounds the size $\log|\mathbb{C}|$ for metric spaces with doubling dimension $d$. 
\begin{lemma}
\cite{Cohen-AddadSS21}
\lemlab{lem:double:dimension}
Given a subset $X$ from a metric space with doubling dimension $d$, $\eps>0$, and an $\alpha$-approximate solution $\calA$ with at most $k\,\polylog(n)$ centers, there exists an $\calA$-approximate centroid set for $X$ of size $|X|\cdot\left(\frac{\alpha}{\eps}\right)^{\O{d}}$.
\end{lemma}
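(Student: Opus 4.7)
\textbf{Proof proposal for \lemref{lem:double:dimension}.}

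The plan is to construct $\mathbb{C}$ explicitly as a union, over each $x\in X$, of a net of candidate center locations that matter for $x$. Write $\rho_x=\dist(x,\calA)$ for the distance from $x$ to its nearest center in $\calA$. Any center $c$ that could possibly satisfy $\Cost(x,c)\le(8z/\eps)^z\Cost(x,\calA)$ must lie in the ball $B_x$ of radius $R_x:=(8z/\eps)\rho_x$ around $x$; call any center outside $B_x$ \emph{irrelevant} for $x$. Inside $B_x$ I would place an $\eps''\rho_x$-net $N_x$ at resolution $\eps''=\Theta(\eps/(z\log(z/\eps)))$, chosen just fine enough that snapping a point of $B_x$ to its nearest net point perturbs the $z$-th power cost by at most an $\frac{\eps}{z\log(z/\eps)}$ fraction of $\Cost(x,\calS)+\Cost(x,\calA)$. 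Finally set $\mathbb{C}:=\bigcup_{x\in X}N_x$.

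Bounding $|\mathbb{C}|$: iterating the doubling property $\O{\log(R_x/(\eps''\rho_x))}=\O{\log(z/\eps)}$ times yields $|N_x|\le 2^{\O{d\log(z/\eps)}}=(z/\eps)^{\O{d}}$, which is absorbed into $(\alpha/\eps)^{\O{d}}$ since $\alpha=2^{\O{z}}$ in all our settings. A union over $|X|$ points gives the claimed bound $|X|\cdot(\alpha/\eps)^{\O{d}}$.

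Correctness: given any candidate $\calS$ of $k$ centers, define $\widetilde{\calS}$ by replacing each $c\in\calS$ with its nearest point in $\mathbb{C}$. For any $x$ satisfying the threshold $\Cost(x,\calS)\le(8z/\eps)^z\Cost(x,\calA)$, the nearest center $c^*\in\calS$ to $x$ lies in $B_x$, hence at distance at most $\eps''\rho_x$ from some $n^*\in N_x\subseteq\mathbb{C}$; the corresponding $\widetilde{c}\in\widetilde{\calS}$ is even closer to $c^*$ than $n^*$ is. Applying the triangle inequality together with the mean value estimate $|a^z-b^z|\le z\max(a,b)^{z-1}|a-b|$ converts this distance perturbation of order $\eps''\rho_x$ into a $z$-th-power cost perturbation bounded by $\frac{\eps}{z\log(z/\eps)}(\Cost(x,\calS)+\Cost(x,\calA))$, as required by the $\calA$-approximate centroid set definition. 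The symmetric case where the threshold holds against $\widetilde{\calS}$ is handled identically since $\Cost(x,\widetilde{\calS})$ close to $(8z/\eps)^z\Cost(x,\calA)$ again places the minimizer inside $B_x$.

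The main obstacle I anticipate is the $z$-th-power non-linearity for $z>1$: because small additive distance errors inflate into $z$-dependent multiplicative cost errors, the resolution $\eps''$ must be chosen carefully so that the mean value bound composes cleanly against both $\Cost(x,\calS)$ and $\Cost(x,\widetilde{\calS})$ uniformly for all relevant $x$, and the extra $\log(z/\eps)$ slack in the denominator of $\eps''$ is exactly what absorbs these losses while keeping the net size within the claimed $(\alpha/\eps)^{\O{d}}$ bound.
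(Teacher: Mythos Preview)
The paper does not prove this lemma at all: it is quoted verbatim as a result of \cite{Cohen-AddadSS21} and used as a black box, so there is no ``paper's own proof'' to compare against. Your net-based construction (place, around each $x\in X$, an $\eps''\rho_x$-net of the ball of radius $(8z/\eps)\rho_x$ and take the union) is exactly the standard construction underlying the cited result, and your size calculation via iterated doubling is correct.

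That said, your correctness argument has a gap in the direction $\Cost(x,\widetilde{\calS})<\Cost(x,\calS)$ and in the ``symmetric'' case. You define $\widetilde{\calS}$ by snapping each $c\in\calS$ to its nearest point in the \emph{global} set $\mathbb{C}=\bigcup_y N_y$. For the upper bound on $\Cost(x,\widetilde{\calS})$ this is fine: the minimizer $c^*$ lies in $B_x$, so some point of $N_x$ is within $\eps''\rho_x$ of it, and the global nearest neighbor is only closer. But for the lower bound you need that no \emph{other} center $d\in\calS$ gets snapped to a point $\widetilde d$ that is artificially close to $x$. If $d\notin B_x$ but $d\in B_y$ for some $y$ with $\rho_y\gg\rho_x$, the snap can move $d$ by as much as $\eps''\rho_y$, which is not controlled by $\rho_x$; your argument gives no reason $\widetilde d$ cannot land well inside $B_x$ and undercut $\Cost(x,\calS)$ by far more than the allowed error. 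The sentence ``the symmetric case \ldots is handled identically'' inherits the same problem: knowing the minimizer of $\widetilde{\calS}$ lies in $B_x$ tells you nothing directly about how far its pre-image in $\calS$ is from it. The cited proof handles this either by snapping each center relative to the net of its \emph{nearest} data point (so the displacement is always governed by the relevant $\rho$), or by an explicit two-sided case analysis; you should adopt one of these rather than appealing to symmetry.
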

It is known that the Euclidean space has doubling dimension $\Theta(d)$, which would give a $d$ dependency on our coreset size. 
However, \cite{FeldmanSS20} showed that the $d$ dependency can be replaced with $\frac{k}{\eps^2}$, which was subsequently improved by a line of works, e.g., \cite{SohlerW18,FengKW19,HuangV20}, ultimately down to a dependency of $\frac{1}{\eps^2}\log\frac{k}{\eps}$ using the following notion of terminal embeddings:
\begin{definition}[Terminal embedding]
Let $\eps\in(0,1)$ and $X\subseteq\mathbb{R}^d$ be a set of $n$ points. 
Then a mapping $f:\mathbb{R}^d\to\mathbb{R}^m$ is a terminal embedding if for all $x\in X$ and all $y\in\mathbb{R}^d$,
\[(1-\eps)\|x-y\|_2\le\|f(x)-f(y)\|_2\le(1+\eps)\|x-y\|_2.\]
\end{definition}
\cite{NarayananN19} gave a construction of a terminal embedding with $m=\O{\frac{1}{\eps^2}\log n}$ that can be applied in linear space through exhaustive search when polynomial runtime is not required. 
Thus \lemref{lem:relsample:full} nows give the following:

\thmclusteronlinecoreset*

For the purpose of clarity, we emphasize that the algorithm does not use sublinear space, even though the sample complexity is sublinear. 
Namely, for each stream update, we construct and apply a terminal embedding to project each point to a lower dimension. 
We then compute the appropriate sampling probability of the projected point, but then sample the original point with the computed sampling probability. 

\section{Sliding Window Model}
In this section, we describe how our online coreset construction for $(k,z)$-clustering on general discrete metric spaces can be used to achieve near-optimal space algorithms for $(k,z)$-clustering in the sliding window model. 

We first recall a standard approach for using offline coreset constructions for insertion-only streaming algorithms. 
Suppose there exists a randomized algorithm that produces an online coreset algorithm that uses $S(n,\eps,\delta)$ points for an input stream of length $n$, accuracy $\eps$, and failure probability $\delta$, where for the ease of discussion, we omit additional dependencies, such as on the dimension $d$, the clustering constraint $k$, the parameter $z$, or additional parameters for whatever problem the coreset construction may approximate. 
A standard approach for using coresets on insertion-only streams is the merge-and-reduce approach, which partitions the stream into blocks of size $S\left(n,\frac{\eps}{2\log n},\frac{\delta}{\poly(n)}\right)$ and builds a coreset for each block. 
Each coreset is then viewed as the leaves of a binary tree with height at most $\log n$, since the binary tree has at most $n$ leaves. 
Then at each level of the binary tree, for each node in the level, a coreset of size $S\left(n,\frac{\eps}{2\log n},\frac{\delta}{\poly(n)}\right)$ is built from the coresets representing the two children of the node. 
Due to the mergeability property of coresets, the coreset at the root of the tree will be a coreset for the entire stream with accuracy $\left(1+\frac{\eps}{2\log n}\right)^{\log n}\le(1+\eps)$ and failure probability $\delta$. 
We give an illustration of this approach in \figref{fig:merge:reduce}.

\begin{figure*}[tb]
\centering
\begin{tikzpicture}[scale=0.45]

\node at (-2,0.5){Stream:};
\draw (0,0) rectangle+(4,1);
\draw (4,0) rectangle+(4,1);
\draw (8,0) rectangle+(4,1);
\draw (12,0) rectangle+(4,1);
\draw[dashed] (2,1) -- (2,2);
\draw[dashed] (6,1) -- (6,2);
\draw[dashed] (10,1) -- (10,2);
\draw[dashed] (14,1) -- (14,2);

\node at (-2,2.5){Level 1:};
\draw (0.4,2) rectangle+(3.2,1);
\draw (4+0.4,2) rectangle+(3.2,1);
\draw (8+0.4,2) rectangle+(3.2,1);
\draw (12+0.4,2) rectangle+(3.2,1);
\draw[dashed] (2,3) -- (3,4);
\draw[dashed] (6,3) -- (5,4);
\draw[dashed] (10,3) -- (11,4);
\draw[dashed] (14,3) -- (13,4);

\node at (-2,4.5){Level 2:};
\draw (0.8,4) rectangle+(6.4,1);
\draw (8+0.8,4) rectangle+(6.4,1);
\draw[dashed] (4,5) -- (6,6);
\draw[dashed] (12,5) -- (10,6);

\node at (-2,6.5){Level 3:};
\draw (1.2,6) rectangle+(12.8,1);
\end{tikzpicture}
\caption{Merge and reduce framework on a stream of length $n$. The coresets at level 1 are the entire blocks. The coresets at level $i$ for $i>1$ are each $\left(1+\O{\frac{\eps}{2\log n}}\right)$-coresets of the coresets at their children nodes in level $i-1$.}
\figlab{fig:merge:reduce}
\end{figure*}
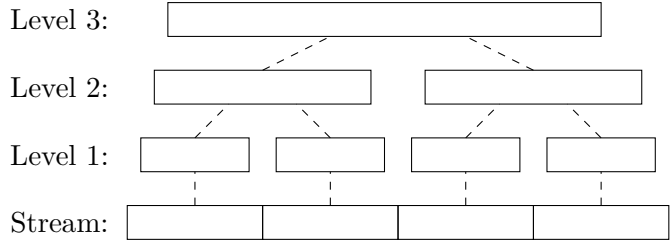

This approach fails for sliding window algorithms because the elements at the beginning of the data stream can expire, and so coresets corresponding to earlier blocks of the stream may no longer accurate, which would result in the coreset at the root of the tree also no longer being accurate. 
On the other hand, suppose we partition the stream into blocks consisting of $S\left(n,\frac{\eps}{2\log n},\frac{\delta}{\poly(n)}\right)$ elements as before, but instead of creating an offline coreset for each block, we can create an online coreset for the elements \emph{in reverse}. 
That is, since the elements in each block are explicitly stored, we can create offline an artificial stream consisting of the elements in the block in reverse and then give the artificial stream as input to the online coreset construction. 
Note that if we also first consider the ``latter'' coreset when merging two coresets, then this effectively reverses the stream. 
Moreover, by the correctness of the online coreset, our data structure provides correctness over any prefix of the reversed stream, or equivalently, any suffix of the stream and specifically, correctness over the sliding window. 

Indeed, \cite{BravermanDMMUWZ20} showed that deterministic online coresets for problems in randomized numerical linear algebra can be used to achieve deterministic algorithms for the corresponding problems in the sliding window model. 
We thus further adapt the merge-and-reduce framework to show that randomized online coresets for problems in clustering can also be used to achieve randomized algorithms for the corresponding problems in the sliding window model. 
We formalize this approach in \algref{alg:coreset:to:sw}. 

\begin{algorithm}[!htb]
\caption{Merge-and-reduce framework for randomized algorithms in the sliding window model, using randomized constructions of online coresets}
\alglab{alg:coreset:to:sw}
\begin{algorithmic}[1]
\Require{A clustering function $f$, a set of points $x_1,\ldots,x_n\subseteq\mathbb{R}^d$, accuracy parameter $\eps>0$, failure probability $\delta\in(0,1)$, and window size $W>0$}
\Ensure{An approximation of $f$ on the $W$ most recent points}
\State{Let $\coreset(X,n,d,k,\eps,\delta)$ be an online coreset construction with $S(n,d,k,\eps,\delta)$ points on a set $X\subseteq\mathbb{R}^d$}
\State{$m\gets\O{S\left(n,d,k,\frac{\eps}{\log n},\frac{\delta}{n}\right)\log n}$}
\State{Initialize blocks $B_0,B_1,\ldots,B_{\log n}\gets\emptyset$}
\For{each point $x_t$ with $t\in[n]$}
\If{$B_0$ does not contain $m$ points}
\State{Prepend $x_t$ to $B_0$, i.e., $B_0\gets\{x_t\}\cup B_0$}
\Else
\State{Let $i$ be the smallest index such that $B_i=\emptyset$}
\State{$B_i\gets\coreset\left(Y,n,d,k,\frac{\eps}{\log n},\frac{\delta}{n^2}\right)$ for $Y=B_0\cup\ldots\cup B_{i-1}$}
\Comment{$Y$ is an ordered set of weighted points}
\For{$j=0$ to $j=i-1$}
\State{$B_j\gets\emptyset$}
\EndFor
\State{$B_0\gets\{x_t\}$}
\EndIf
\State{\Return the ordered set $B_{\log n}\cup\ldots\cup B_0$}
\EndFor
\end{algorithmic}
\end{algorithm}

\begin{theorem}
\thmlab{thm:coreset:framework}
Let $x_1,\ldots,x_n$ be a stream of points in $[\Delta]^d$,  $\eps>0$, and let $X=\{x_{n-W+1},\ldots,x_n\}$ be the $W$ most recent points. 
Suppose there exists a randomized algorithm that with probability at least $1-\delta$, outputs an online coreset algorithm for a $k$-clustering problem with $S(n,d,k,\eps,\delta)$ points. 
Then there exists a randomized algorithm that with probability at least $1-\delta$, outputs a coreset for the $k$-clustering problem in the sliding window model with $\O{S\left(n,d,k,\frac{\eps}{\log n},\frac{\delta}{n^2}\right)\log n}$ points.  
\end{theorem}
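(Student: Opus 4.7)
The plan is to verify that \algref{alg:coreset:to:sw} correctly maintains a hierarchical collection of online coresets built in reverse chronological order, so that the most recent $W$ updates are captured accurately. First, I would establish the following invariant: at every step of the algorithm, each block $B_i$ with $i \ge 1$ represents an online coreset on an ordered (reversed) set of weighted points $Y$ which itself corresponds to a consecutive segment of the stream. Because the raw buffer $B_0$ is maintained by \emph{prepending} new arrivals, and because the merge step concatenates $B_0 \cup B_1 \cup \cdots \cup B_{i-1}$ in that order before passing it to $\coreset$, the input $Y$ is exactly the segment in reverse chronological order. Hence the online coreset property, which guarantees correctness for every prefix of the construction's input, translates to correctness for every temporal \emph{suffix} of the corresponding stream segment.

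Second, I would argue that the sliding window $X = \{x_{n-W+1}, \ldots, x_n\}$ can be reconstructed from the stored blocks. At the end of the stream, the nonempty blocks form a dyadic-like decomposition of the recent history: $B_0$ holds the most recent raw points, and each higher-level $B_i$ covers a contiguous earlier segment. A suffix of length $W$ intersects these segments in a contiguous manner: some lower-indexed blocks lie entirely in $X$, at most one ``boundary'' block $B_j$ straddles the window cutoff, and higher-indexed blocks lie entirely outside $X$. For the fully-contained blocks and for $B_0$, we use their stored coreset (or raw) content directly. For $B_j$, we invoke the online coreset guarantee applied to the prefix of the reversed input corresponding to those elements of the segment inside the window; this yields a valid coreset for the portion of $B_j$'s segment inside $X$. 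Summing over these pieces gives a coreset for $X$ with the desired approximation.

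Third, I would handle error and failure aggregation. The merge-and-reduce tree has depth $O(\log n)$, and each online coreset at each level incurs a multiplicative error factor of $1 + \eps/\log n$. Composing along any root-to-leaf path yields a final error of $(1+\eps/\log n)^{\log n} \le e^{\eps} \le 1 + 2\eps$, which can be absorbed by rescaling $\eps$ by a constant. For failure probability, the algorithm invokes $\coreset$ at most $n$ times over the stream (each invocation consumes at least one fresh update), and each invocation fails with probability at most $\delta/n^2$; a union bound gives total failure probability at most $\delta/n \le \delta$. The space bound follows since there are at most $\log n + 1$ active blocks, each of size at most $O(S(n,d,k,\eps/\log n,\delta/n^2))$ (the buffer $B_0$ is of comparable size by construction of $m$), yielding the claimed $O(S(n,d,k,\eps/\log n,\delta/n^2) \log n)$ total.

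The main obstacle I expect is the second step: making rigorous the claim that the reversed-feeding guarantee composes cleanly through the recursive merging levels, given that higher-level blocks are built from \emph{weighted} inputs rather than raw stream points. Concretely, one must check that the online coreset construction of \thmref{thm:cluster:online:coreset} (i.e., $\RingSample$) extends to weighted inputs and that, when a level-$i$ block $B_i$ is later used as a child in a level-$(i+1)$ merge, the temporal ordering of the underlying stream points is correctly preserved by the ordered concatenation $Y = B_0 \cup \cdots \cup B_{i-1}$. Granted these two technical points, the suffix-for-prefix translation propagates through all $O(\log n)$ levels, and the above analysis carries through.
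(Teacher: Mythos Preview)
Your proposal is correct and follows essentially the same approach as the paper: both use \algref{alg:coreset:to:sw} with the reversed-feeding trick so that online-coreset prefix guarantees become suffix guarantees, then apply merge-and-reduce with $O(\log n)$ levels, compounding $(1+\eps/\log n)$ errors and union-bounding over at most $n^2$ coreset calls. Your treatment of the ``boundary'' block and your flagged concern about weighted inputs are in fact more careful than the paper's own proof, which handles these points implicitly.
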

\begin{proof}
Consider \algref{alg:coreset:to:sw}. 
Let $\coreset(X,n,d,k,\eps,\delta)$ be a randomized algorithm that, with probability at least $1-\delta$, computes an online coreset for a $k$-clustering problem $f$ with $S(n,d,k,\eps,\delta)$ points. 

We first claim that for each $B_i$ is a $\left(1+\frac{\eps}{\log n}\right)^i$ online coreset for $2^{i-1}m$ points. 
To that end, observe that $B_i$ can only be non-empty if at some time, $B_0$ contains $m$ points and $B_1,\ldots,B_{i-1}$ are all non-empty. 
By the correctness of the subroutine $\coreset$, $B_i$ is a $\left(1+\frac{\eps}{\log n}\right)$ online coreset for the points in $B_0\cup\ldots\cup B_{i-1}$ at some point during the stream. 
Hence by induction, $B_i$ is a $\left(1+\frac{\eps}{\log n}\right)\left(1+\frac{\eps}{\log n}\right)^{i-1}=\left(1+\frac{\eps}{\log n}\right)^i$ coreset for $m+\sum_{j=1}^{i-1} 2^{j-1}m=2^{i-1}m$ points. 

Now, because \algref{alg:coreset:to:sw} inserts the newest points at the beginning of $B_0$, then the stream is fed in reverse to the merge-and-reduce procedure. 
Thus, for any $W\in[2^{i-1},2^i)$, $B_0\cup\ldots\cup B_i$ provides an online coreset $k$-clustering for the $W$ most recent points in the stream. 

To analyze the probability of failure, we remark that there are at most $n$ points in the stream. 
For each point, there are at most $n$ coresets constructed by the subroutine $\coreset$ (in fact, the number of coreset constructions is upper bounded by $\O{\log n}$). 
Since each subroutine is called with failure probability $\frac{\delta}{n^2}$, then by a union bound, the total failure probability is at most $\delta$. 

To analyze the space complexity, note that there are at most $\O{\log n}$ coreset constructions $B_0,\ldots,B_{\log n}$ maintained by the algorithm. 
Each coreset construction samples $S\left(n,d,k,\frac{\eps}{\log n},\frac{\delta}{n^2}\right)$ points. 
Hence, the total number of sampled points is $\O{S\left(n,d,k,\frac{\eps}{\log n},\frac{\delta}{n^2}\right)\log n}$. 
\end{proof}

By \thmref{thm:cluster:online:coreset} and \thmref{thm:coreset:framework}, we have:

\begin{restatable}{theorem}{thmcoresetsw}
\thmlab{thm:coreset:sw}
There exists an algorithm that samples $\frac{k}{\min(\eps^4,\eps^{2+z})}\,\polylog\frac{n\Delta}{\eps}$ points and with high probability, outputs a $(1+\eps)$-coreset to $(k,z)$-clustering in the sliding window model. 
\end{restatable}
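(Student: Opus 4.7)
The plan is to instantiate the merge-and-reduce framework of \thmref{thm:coreset:framework} with the online coreset construction of \thmref{thm:cluster:online:coreset} as the black-box subroutine $\coreset$. Since \thmref{thm:coreset:framework} already establishes that any randomized online coreset construction using $S(n,d,k,\eps,\delta)$ points translates into a sliding window coreset of size $\O{S\left(n,d,k,\frac{\eps}{\log n},\frac{\delta}{n^2}\right)\log n}$, nearly all the work is done; the remaining task is just to bookkeep the parameters.

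First I would invoke \thmref{thm:cluster:online:coreset}, which guarantees that an online coreset for $(k,z)$-clustering on streams of length $n$ over $[\Delta]^d$ can be built with
\[S(n,d,k,\eps,\delta)=\frac{k}{\min(\eps^4,\eps^{2+z})}\,\polylog\frac{n\Delta}{\eps}\]
sampled points and failure probability $1/\poly(n)$, where the polynomial degree is chosen large enough to absorb the $\delta/n^2$ dependence when taking a union bound across all $\O{\log n}$ merges across the lifetime of the stream. Next I would plug this $S$ into \thmref{thm:coreset:framework} with accuracy parameter $\eps/\log n$ in place of $\eps$ and failure probability $\delta/n^2$ in place of $\delta$, which produces a sliding window coreset with total sample complexity
\[\O{S\!\left(n,d,k,\tfrac{\eps}{\log n},\tfrac{1}{\poly(n)}\right)\cdot\log n}.\]

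Substituting the bound for $S$ and simplifying, the $\log n$ factor introduced in the accuracy parameter yields $\left(\frac{\log n}{\eps}\right)^{\max(4,2+z)}$, and both this factor and the outer $\log n$ as well as the $\polylog(n\Delta/\eps)$ internal to $S$ all collapse into the $\polylog\frac{n\Delta}{\eps}$ term in the final bound. The dominant factor $\frac{k}{\min(\eps^4,\eps^{2+z})}$ is preserved, so the overall sample complexity is $\frac{k}{\min(\eps^4,\eps^{2+z})}\polylog\frac{n\Delta}{\eps}$, matching the claim.

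There is no real obstacle here; the work is entirely in the two ingredients (the online coreset construction in \secref{sec:online:coreset} and the reversed merge-and-reduce argument of \thmref{thm:coreset:framework}). The only subtlety worth flagging is confirming that feeding blocks into $\coreset$ in \emph{reverse} order still triggers the online coreset's prefix guarantee as a \emph{suffix} guarantee for the original stream; this is exactly the point exploited in \algref{alg:coreset:to:sw} by prepending fresh points to $B_0$, and it is what allows the framework to survive point expirations on the sliding window. With that observation, correctness over the active window $X=\{x_{n-W+1},\ldots,x_n\}$ follows, and the high-probability guarantee follows from a union bound over the at most $\O{n\log n}$ invocations of $\coreset$, each set to fail with probability $1/\poly(n)$.
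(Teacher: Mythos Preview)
Your proposal is correct and matches the paper's approach exactly: the paper simply states that \thmref{thm:coreset:sw} follows from combining \thmref{thm:cluster:online:coreset} with \thmref{thm:coreset:framework}, and your write-up supplies the straightforward parameter bookkeeping that the paper leaves implicit.
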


Using an offline algorithm for $(k,z)$-clustering for post-processing after the data stream, we have:

\thmclustersw*

\section{Lower Bounds}
In this section, we show that any $(1+\eps)$-online coreset for $(k,z)$-clustering requires $\Omega\left(\frac{k}{\eps^2}\log n\right)$ points. 
The intuition is somewhat straightforward and in a black-box manner. 
\cite{Cohen-AddadLSS22} showed the existence of a set $X$ of $\Omega\left(\frac{k}{\eps^2}\right)$ unit vectors such that any sublinear space data structure would not be able to accurately determine $\Cost(C,X)$ for a set of $k$ unit vectors $C$. 
They thus showed that any offline $(1+\eps)$-coreset construction for $(k,z)$-clustering required $\Omega\left(\frac{k}{\eps^2}\right)$ points. 

Because an online $(1+\eps)$-coreset must answer queries on all prefixes of the stream, our goal is to essentially embed $\Omega(\log n)$ instances of the hard instance of \cite{Cohen-AddadLSS22} into the stream, which would require $\Omega\left(\frac{k}{\eps^2}\log n\right)$ points. 
To enforce the data structure to sample $\Omega\left(\frac{k}{\eps^2}\right)$ points for each of the hard instance, we give each of the instances increasingly exponential weight. 
That is, we give the points in the $i$-th instance $\tau^i$ weight for some constant $\tau>1$, by inserting $\tau^i$ copies of each of the points. 
Because the weight of the $i$-th instance is substantially greater than the sum of the weights of the previous instances, any $(1+\eps)$-online coreset must essentially be a $(1+\eps)$-coreset for the $i$-th instance, thus requiring $\Omega\left(\frac{k}{\eps^2}\right)$ points for the $i$-th instance. 
This reasoning extends to all of the $\Omega(\log n)$ instances, thereby giving a lower bound of $\Omega\left(\frac{k}{\eps^2}\log n\right)$ points. 

We first recall the following offline coreset lower bound by \cite{Cohen-AddadLSS22}. 

\begin{theorem}
\thmlab{thm:subinstance:lb}
\cite{Cohen-AddadLSS22}
For $d=\Theta\left(\frac{k}{\eps^2}\right)$, let $X=e_1,\ldots,e_d\in\mathbb{R}^{2d}$ be the set of elementary vectors. 
Let $z$ be a constant and let $a_1,\ldots,a_m\in\mathbb{R}^{2d}$ with corresponding weights $w_1,\ldots,w_m\in\mathbb{R}$ be a weighted set $P$ of points. 
Then there exists a set of $k$ unit vectors $C=c_1,\ldots,c_k\in\mathbb{R}^{2d}$ such that for $m=o\left(\frac{k}{\eps^2}\right)$,
\begin{enumerate}
\item
$\Cost(C,X)=\sum_{i=1}^d\min_{j\in[k]}\|e_i-c_j\|_2^2\ge 2^{z/2}d-2^{z/2}\cdot\max(1,z/2)\cdot\sqrt{dk}$. 
\item
$\Cost(C,P)=\sum_{i=1}^m w_i\min_{j\in[k]}\|a_i-c_j\|_2^2<(1-\eps)(2^{z/2}d-2^{z/2}\cdot\max(1,z/2)\cdot\sqrt{dk})$. 
\end{enumerate}
\end{theorem}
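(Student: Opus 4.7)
The plan is to construct $C$ as a structured set of $k$ unit vectors based on a partition of the coordinate set $[d]$, with the partition chosen adaptively as a function of $P$. The construction makes part (1) hold uniformly over the choice of partition, leaving all the freedom of the partition available to be exploited adversarially in part (2), where the condition $m = o(k/\eps^2)$ bounds the ``complexity'' of $P$.

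First, for any partition $\{S_\ell\}_{\ell=1}^k$ of $[d]$ into $k$ blocks of size $s = d/k$, I would define $c_{S_\ell} = \frac{1}{\sqrt{s}}\sum_{i \in S_\ell} e_i$, which is a unit vector in $\mathbb{R}^{2d}$ supported on the first $d$ coordinates. Since each $e_i$ lies in exactly one block and all other $c_{S_{\ell'}}$ are orthogonal to $e_i$, one has $\min_{\ell}\|e_i - c_{S_\ell}\|_2^2 = 2 - 2/\sqrt{s} = 2(1 - \sqrt{k/d})$. Thus
\[\Cost(C,X) \;=\; d \cdot 2^{z/2}\bigl(1 - \sqrt{k/d}\bigr)^{z/2}.\]
Applying the inequality $(1-x)^{z/2} \ge 1 - \max(1,z/2)\,x$ for $x = \sqrt{k/d} \in [0,1]$ (which holds by convexity of $(1-x)^{z/2}$ for $z \ge 2$ and by monotonicity of $y \mapsto y^{z/2}$ combined with $1-x \ge (1-x)^1$ for $z \le 2$), one obtains $\Cost(C,X) \ge 2^{z/2}d - 2^{z/2}\max(1,z/2)\sqrt{dk}$, verifying part (1). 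Notably, this bound is the \emph{same} for every partition $\{S_\ell\}$, leaving the choice of partition free to use against $P$.

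Next, for part (2), I would choose the partition based on $P$. A direct expansion gives
\[\Cost(C,P) \;=\; \sum_{j=1}^m w_j\bigl(\|a_j\|_2^2 + 1\bigr) \;-\; \frac{2}{\sqrt{s}}\sum_{j=1}^m w_j \max_{\ell}\sum_{i \in S_\ell} a_j[i],\]
so we want a partition that makes the weighted sum of maxima large. The plan is to analyze a uniformly random partition of $[d]$ into $k$ equal-size blocks, bound the expected value and the variance of $\Cost(C,P)$, and then apply the probabilistic method to extract a specific partition satisfying $\Cost(C,P) < (1-\eps)T$, where $T$ denotes the RHS of (1). The critical input is that the ambient geometry has dimension $2d = \Theta(k/\eps^2)$ while $P$ has only $m = o(k/\eps^2)$ points, so the $a_j$'s collectively span a subspace that leaves a large ``free'' direction through which the partition-induced $C$ can deviate significantly in cost from $\Cost(C,X)$.

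The main obstacle is the quantitative version of the last step: one needs the gap $\Cost(C,X) - \Cost(C,P)$ to exceed $\eps T = \Omega(\eps d)$, and extracting this from the bound $m = o(k/\eps^2)$ requires a careful combinatorial argument. Concretely, one must show that the variance (or a higher-order deviation bound) of $\sum_{j} w_j \max_\ell \sum_{i \in S_\ell} a_j[i]$ over random partitions dominates the ``uniform'' baseline by the desired amount, using that $m \cdot \eps^2 = o(k)$ prevents $P$ from simulating the full cost profile of all $\binom{d}{s,\ldots,s}$ partitions. This is the step echoing the combinatorial heart of \cite{Cohen-AddadLSS22}: translating the ambient-dimension barrier into a quantitative cost-separation between $X$ and any sufficiently small weighted $P$.
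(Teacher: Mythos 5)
Your part (1) is fine, but the construction you fix for the query family is too weak to ever yield part (2), so the deferred ``quantitative'' step is not a technical obstacle you could overcome with a better deviation bound --- it is impossible. With unsigned partition centers $c_{S_\ell}=\frac{1}{\sqrt{s}}\sum_{i\in S_\ell}e_i$, every data point $e_i$ has the \emph{same} nearest-center cost $2-2\sqrt{k/d}$ for \emph{every} partition (it is at distance squared $2-2/\sqrt{s}$ from the center of its own block and $2$ from all others). Consequently the one-point weighted set $P=\{e_1\}$ with weight $w_1=d$ (certainly $m=1=o(k/\eps^2)$) satisfies $\Cost(C,P)=d\,(2-2\sqrt{k/d})^{z/2}=\Cost(C,X)\ge 2^{z/2}d-2^{z/2}\max(1,z/2)\sqrt{dk}$ for every partition, so property (2) fails for \emph{all} queries in your family; the same is true for a suitably weighted point at the origin. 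In other words, your family admits an exact one-point ``coreset,'' so no averaging/probabilistic-method argument over random partitions can produce the required $\eps$-separation, and indeed for points of $X$ (or the origin) the quantity $\max_\ell\sum_{i\in S_\ell}a_j[i]$ in your expansion is partition-independent, so its variance is zero exactly where it would be needed.

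The missing idea, which is the heart of the construction this paper imports from \cite{Cohen-AddadLSS22} (Lemmas 31, 33, 34 there; the present paper does not reprove it), is to use an \emph{exponentially large} query family in which the centers carry sign/selection patterns (this is what the extra $d$ coordinates of $\mathbb{R}^{2d}$ are for): the cost of the full instance $X$ is essentially invariant over the family by concentration, while for any fixed weighted set of $m=o(k/\eps^2)$ points a counting argument shows some pattern correlates with the summary so as to skew its estimate by a $(1\pm\eps)$ factor. Your partition-only family has the first property but not the second, which is precisely why the argument collapses. Two smaller issues: your expansion of $\Cost(C,P)$ via $\|a_j\|_2^2+1-2\max_\ell\langle a_j,c_{S_\ell}\rangle$ is valid only for $z=2$ (for general $z$ the $z/2$ power sits outside and the max/min do not commute with it so cleanly), and the theorem requires the centers to work against an arbitrary weighted $P$ with possibly negative or unnormalized weights, which any counting argument must also accommodate.
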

We remark that the first property is due to Lemma 31 pf \cite{Cohen-AddadLSS22} and the second property is due to Lemma 33 and Lemma 34 of \cite{Cohen-AddadLSS22}. 

Let $\gamma=\Theta\left(\log\frac{n}{d'}\right)$. 
Let $d'=\Theta\left(\frac{k}{\eps^2}\right)$ be the dimension of the hard instance in \thmref{thm:subinstance:lb} and set $d=\gamma d'$, so that we can partition the space $\mathbb{R}^{2d}$ into $\gamma$ groups of $2d'$ coordinates. 

We define a stream by creating $\gamma$ weighted instances of the hard instance defined in \thmref{thm:subinstance:lb}. 
Each of the $\gamma$ hard instances will be embedded into a separate partition of $2d'$ coordinates of $\mathbb{R}^{2d}$. 
Namely, the first instance consists of the vectors $e_1,\ldots,e_{d'}$ being inserted into the stream. 
By \thmref{thm:subinstance:lb}, any $(1+\eps)$-coreset must contain $\Omega\left(\frac{k}{\eps^2}\right)$ points. 
The next instance consists of the vectors $e_{1+2d'},\ldots,e_{3d'}$ each being inserted $\tau=100$ times into the stream. 
That is, after the vector $e_{d'}$ arrives in the stream from the first hard instance, then $t$ copies of $e_{1+2d'}$ arrive in the stream, followed by then $t$ copies of $e_{2+2d'}$, and so forth. 
Due to the weights of these vectors, any $(1+\eps)$-coreset must essentially be a $(1+\eps)$-coreset for the second hard instance and thus contain $\Omega\left(\frac{k}{\eps^2}\right)$ points with support in the second group of $2d'$ coordinates. 

More generally, for each $i\in[\gamma]$, the stream inserts $t^{i-1}$ copies of $e_{1+2(i-1)d'}$, followed by $\tau^{i-1}$ copies of $e_{2+2(i-1)d'}$, and so on. 
The main intuition is that due to the weights of the $i$-th group of $d'$ elementary vectors, an $(1+\eps)$-online coreset must contain a $(1+\eps)$-coreset for the $i$-th hard instance. 
Moreover, since the $(1+\eps)$-online coreset must be a coreset for any prefix of the stream, then it needs to be a $(1+\eps)$-coreset for each of the hard instances. 
Hence, the online coreset must contain $\gamma\cdot\Omega\left(\frac{k}{\eps^2}\right)=\Omega\left(\frac{k}{\eps^2}\cdot\frac{\log n}{\log\frac{1}{\eps}}\right)$ points. 

\begin{lemma}
\lemlab{lem:lb:each:stream}
Let $\tau=100$. 
For each integer $i>0$, let $S_i$ be the stream that consists of $\tau^{i-1}$ consecutive copies of $e_{1+2(i-1)d'}$, followed by $\tau^{i-1}$ copies of $e_{2+2(i-1)d'}$, and so on. 
Let $S$ be the stream that consists of $S_1\circ S_2\circ\ldots$. 
Then for each $i$, any $(1+\eps)$-online coreset after the arrival of $S_i$ must consist of $i\cdot\Omega\left(\frac{k}{\eps^2}\right)$ points. 
\end{lemma}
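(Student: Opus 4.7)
The plan is to induct on $i$. For the base case $i = 1$, the stream up to $t_1$ is exactly the unit-weight hard instance $X_1 = \{e_1,\ldots,e_{d'}\}$, so the online coreset must be a $(1+\eps)$-coreset for $X_1$; by \thmref{thm:subinstance:lb}, this requires $\Omega(k/\eps^2)$ points, since any smaller set $P$ admits a witness $C^\ast$ that violates the coreset inequality.

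For the inductive step, I will exploit two structural features of the construction. First, the online coreset definition in the excerpt fixes the weight function $w$ and sets $S_t = S\cap\{X_1,\ldots,X_t\}$, so once a point is sampled it remains with the same weight. Combined with the fact that the stream between $t_{i-1}$ and $t_i$ lies entirely in block $i$, the new samples $S':=S_{t_i}\setminus S_{t_{i-1}}$ are disjoint from $S_{t_{i-1}}$ and have support in block $i$; it therefore suffices to show $|S'|=\Omega(k/\eps^2)$. Second, the weight $\tau^{i-1}$ of each point in $S_i$ dominates the total stream weight $W_0 := \sum_{j<i}\tau^{j-1}d' = d'(\tau^{i-1}-1)/(\tau-1)$ accumulated before $S_i$ by a factor of $\tau-1$. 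Applying the $(1+\eps)$-coreset inequality at time $t_{i-1}$ to any query $C$ of $k$ unit vectors supported in block $i$ yields true cost $2^{z/2}W_0$ and coreset cost $2^{z/2}W_{<i}$ (since every earlier stream point and every earlier sample is at squared distance $2$ from every center of $C$), pinning down $W_{<i}:=\sum_{q\in S_{t_{i-1}}}w(q) = (1\pm\eps)W_0$. Subtracting this from the coreset inequality at time $t_i$ with the same $C$ gives
\[ \bigl\lvert \Cost(C, S') - \tau^{i-1}\Cost(C, X_i) \bigr\rvert \;\leq\; \eps\bigl( 2\cdot 2^{z/2}W_0 + \tau^{i-1}\Cost(C, X_i) \bigr), \]
which, since $2^{z/2}W_0/(\tau^{i-1}\Cost(C, X_i)) \leq 1/(\tau-1)$ whenever $\Cost(C, X_i)\geq\beta$, essentially states that $S'$ is a $(1 + c\eps)$-coreset for the weighted hard instance in block $i$ (each $e_{l+2(i-1)d'}$ with weight $\tau^{i-1}$) on hard witness queries, with $c:=(\tau+1)/(\tau-1)$.

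To derive the contradiction, suppose for contradiction that $|S'|=o(k/\eps^2)$. Choose a constant $c'>c$ (for $\tau=100$, take $c' = 1.03 > 101/99 = c$) and apply \thmref{thm:subinstance:lb} with accuracy parameter $\eps_0 := c'\eps$ and hard-instance dimension $d' = \Theta(k/\eps_0^2) = \Theta(k/\eps^2)$ in block $i$. This produces a witness $C^\ast$ of $k$ unit vectors supported in block $i$ with $\Cost(C^\ast, X_i) \geq \beta$ and $\Cost(C^\ast, S') < (1-c'\eps)\tau^{i-1}\beta$. Combining this upper bound with the lower bound $\Cost(C^\ast, S') \geq (1-c\eps)\tau^{i-1}\Cost(C^\ast, X_i) \geq (1-c\eps)\tau^{i-1}\beta$ from the displayed inequality forces $c' < c$, contradicting $c' > c$. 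Hence $|S'|=\Omega(k/\eps^2)$, which together with the inductive hypothesis gives $|S_{t_i}| \geq i\cdot\Omega(k/\eps^2)$. The main technical obstacle is tracking these constants carefully: underestimation of block $i$'s cost can be partially absorbed by overestimation in earlier blocks, loosening the effective accuracy on block $i$ from $(1+\eps)$ to $(1+c\eps)$; this is offset by invoking \thmref{thm:subinstance:lb} with the marginally larger parameter $\eps_0 = c'\eps$, which does not affect the order of the lower bound since $\eps_0 = \Theta(\eps)$.
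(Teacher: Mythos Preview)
Your proof is correct and follows the same inductive strategy as the paper: show that the samples $S'$ collected during $S_i$ must themselves behave like a coreset for the $i$-th hard instance, then invoke \thmref{thm:subinstance:lb} to force $|S'|=\Omega(k/\eps^2)$. The paper's write-up of the inductive step is looser—it first argues $P_i\neq\emptyset$ via a witness $U_{i-1}$ supported in block $i-1$, and then compares $\Cost(U_i,P_i)$ directly to the lower bound on $\Cost(U_i,X_i)$ without cleanly isolating the contribution of the old coreset $C_{i-1}$ to $\Cost(U_i,C_i)$. Your subtraction argument (pinning down $W_{<i}=(1\pm\eps)W_0$ via the coreset inequality at time $t_{i-1}$ with centers in block $i$, then absorbing the resulting slack into the explicit constant $c=(\tau+1)/(\tau-1)$ and compensating by applying \thmref{thm:subinstance:lb} at the slightly looser accuracy $\eps_0=c'\eps$) is a more rigorous execution of the same idea; the adjustment $\eps_0=\Theta(\eps)$ costs nothing asymptotically but is exactly what is needed to close the constants.
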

\begin{proof}
We prove the claim by induction on $i$. 
The base case of $i=1$ follows from \thmref{thm:subinstance:lb}. 

Now suppose the claim holds for a fixed $i-1$. 
Let $X_i$ be the set of points that have arrived after $S_i$, i.e., $X_i=S_1\circ\ldots\circ S_i$.  
Let $C_{i-1}$ be any $(1+\eps)$-online coreset for $S$ after the arrival of $S_{i-1}$. 
Let $P_i$ be a set of weighted points sampled during stream $S_i$, so that $C_i=C_{i-1}\cup P_i$. 
Since each point in $S_i$ has weight $\tau^i$, then by scaling the first property of \thmref{thm:subinstance:lb}, we have that there exists a set of $k$ unit vectors $U_i=c_1,\ldots,c_k\in\mathbb{R}^{2d}$ such that
\begin{align}
\Cost(U,X_i)&=\sum_{a=1}^i\sum_{b=1}^{d'}\tau^a\min_{j\in[k]}\|e_{b+2(a-1)d'}-c_j\|_2^z\nonumber\\
&\ge\sum_{b=1}^{d'}\tau^i\min_{j\in[k]}\|e_{b+2(i-1)d'}-c_j\|_2^z\nonumber\\
&\ge(\tau^i)(2^{z/2}d-2^{z/2}\cdot\max(1,z/2)\cdot\sqrt{dk})\eqnlab{eqn:lower}.
\end{align}
In particular, the unit vectors $U_i=c_1,\ldots,c_k$ have support entirely in the $i$-th group of $2d'$ coordinates in $\mathbb{R}^{2d}$.
By the same argument, there exists a set $U_{i-1}$ with the same properties in the $(i-1)$-th group of $2d'$ coordinates in $\mathbb{R}^{2d}$.

By the correctness of the online coreset, we have
\[\Cost(U_{i-1},C_{i-1})\le(1+\eps)\Cost(U_{i-1},X_{i-1})=(1+\eps)\sum_{a=1}^{i-1}\Cost(U_{i-1},S_a).\]
Since $U_{i-1}$ consists of unit vectors and each substream $S_a$ consists of unit vectors, then we have
\[\Cost(U_{i-1},S_a)\le 2d'\tau^a.\]
Thus for $\eps\in(0,1)$, 
\[\Cost(U_{i-1},C_{i-1})\le2\sum_{a=1}^{i-1}\left(2d'\tau^a\right)\le 8d'\tau^{i-1}<\frac{1}{10}d'\tau^i,\]
since $\tau=100$. 
On the other hand, since $U_{i-1}$ has support entirely in the $(i-1)$-th group of $2d'$ coordinates and $S_i$ has support entirely in the $i$-th group of $2d'$ coordinates in $\mathbb{R}^{2d}$, then 
\[\Cost(U_{i-1},X_i)\ge\Cost(U_{i-1},S_i)\ge 2d'\tau^i.\]
Thus for $C_i$ to be a $(1+\eps)$-online coreset for $\eps\in(0,1)$, $C_i$ must sample additional points from $X_i$ on top of $C_{i-1}$. 
Hence, $P_i\neq\emptyset$. 

In particular, let $P_i$ consist of vectors $y_1,\ldots,y_m$ with weights $w_1,\ldots,w_m$. 
Since $P_i\neq\emptyset$, then  
\[\Cost(U_i,C_i)=\Cost(U,C_{i-1}\cup P_i)\le\Cost(U,P_i).\]
If $|P_i|=o\left(\frac{k}{\eps^2}\right)$, then by the second property of \thmref{thm:subinstance:lb}, we have 
\[\Cost(U_i,P_i)=\sum_{b=1}^m\min_{j\in[k]}w_b\|y_b-c_j\|_2^2< \tau^i(1-\eps)(2^{z/2}d-2^{z/2}\cdot\max(1,z/2)\cdot\sqrt{dk}),\]
which together with \eqnref{eqn:lower} contradicts the fact that $C_i$ is an $(1+\eps)$-online coreset for $X_i$. 

Therefore, we have $|P_i|=\Omega\left(\frac{k}{\eps^2}\right)$. 
Moreover, since $P_i$ has disjoint support from $C_{i-1}$, then by induction,
\[|C_i|=|C_{i-1}\cup P_i|=|C_{i-1}|+|P_i|=i\cdot\Omega\left(\frac{k}{\eps^2}\right).\]
\end{proof}

\thmlbonlinecoreset*
\begin{proof}
Let $\gamma=\Theta\left(\log\frac{n}{d'}\right)$. 
For each $i\in[\gamma]$, construct the stream $S_i$ as in the statement of \lemref{lem:lb:each:stream}. 
Observe that $|S_i|=d'\cdot t^i$ for $t=100$ and so under the settings of the parameter $\gamma$ with the appropriate constant, the total length of the stream $S=S_1\circ\ldots\circ S_\gamma$ is precisely $n$. 
Moreover, by \lemref{lem:lb:each:stream}, any $(1+\eps)$-online coreset must store $\gamma\cdot\Omega\left(\frac{k}{\eps^2}\right)=\Omega\left(\frac{k}{\eps^2}\log n\right)$ points for $n=\poly(d)$. 
\end{proof}

\section{Conclusion}
In this paper, we give an algorithm outputs a $(1+\varepsilon)$-approximation to $(k,z)$-clustering in the sliding window model, while using $\frac{k}{\min(\varepsilon^4,\varepsilon^{2+z})}\,\text{polylog}\frac{n\Delta}{\varepsilon}$ words of space when the points are from $[\Delta]^d$. 
Our algorithm not only improves on a line of work~\cite{BabcockDMO03,BravermanLLM16,EpastoLVZ17,BorassiELVZ20,EpastoMMZ22}, but also nearly matches the space used by the offline coreset constructions of~\cite{Cohen-AddadSS21}, which is known to be near-optimal in light of a $\Omega\left(\frac{k}{\varepsilon^{2+z}}\right)$ lower bound for the size of an offline coreset~\cite{HuangLW22}. 

We also give a lower bound that shows a logarithmic overhead in the number of points is needed to maintain a $(1+\eps)$-online coreset compared to a $(1+\eps)$-coreset. 
That is, we gave in \thmref{thm:lb:online:coreset} a set $X\subset[\Delta]^d$ of $n$ points $x_1,\ldots,x_n$ such that any $(1+\eps)$-online coreset for $k$-means clustering on $X$ requires $\Omega\left(\frac{k}{\eps^2}\log n\right)$ points. 
However, this does not rule out whether the $\log n$ overhead is necessary for $(k,z)$-clustering in the sliding window model, since a sliding window algorithm does not necessarily need to maintain an online coreset. 
We leave this question as a possible direction for future work.

\def\shortbib{0}
\bibliographystyle{alpha}
\bibliography{references}
\appendix

\section{On the Proof of \texorpdfstring{\thmref{thm:kmeans:offline:coreset}}{Theorem 2.1}}
\seclab{sec:css:dependence}
We remark that Theorem 1 of \cite{Cohen-AddadSS21} is stated for sampling a fixed number of points with replacement from each group, rather than sampling each point independently without replacement. 
By contrast, \thmref{thm:kmeans:offline:coreset} is stated for sampling each point independently without replacement. 
In this section, we briefly outline the proof of Theorem 1 of \cite{Cohen-AddadSS21} and how the analysis translates to the statement of \thmref{thm:kmeans:offline:coreset}. 

At a high level, the coreset construction of \cite{Cohen-AddadSS21} first collects rings of an approximate solution $\calA$ of $k$ points into groups, using a similar approach to that described in \secref{sec:online:coreset} with $\beta=1$.  
The algorithm then computes a coreset for each group first using a procedure $\GroupSample$ and then using a procedure $\SensitivitySample$ for some of the points not considered by the first procedure. 
We briefly describe both procedures, as well as how to adapt them to the setting where each point is sampled independently and without replacement. 

\subsection{Adaptation of Group Sampling}
The $\GroupSample$ procedure of \cite{Cohen-AddadSS21} samples a fixed $\Lambda_1$ number of points from each group $G$ with probability proportional to the contribution of each corresponding cluster of the point to the group. 
That is, given clusters $\widetilde{C_1},\ldots,\widetilde{C_k}$ induced by $\calA$ on $G$, $\GroupSample$ then performs $\Lambda_1$ rounds of sampling. 
Each round samples a single point, where a point $p\in\widetilde{C_i}$ is sampled proportional to $\frac{\Cost(\widetilde{C_i},\calA)}{|\widetilde{C_i}|\cdot\Cost(G,\calA)}$ and rescaled appropriately. 
Then $\GroupSample$ offers the following guarantees:
\begin{lemma}[Lemma 2 of \cite{Cohen-AddadSS21}]
\lemlab{lem:css:lemtwo}
Let $(X,\dist)$ be a metric space, $k$, $z$ be positive integers, $G$ be a group of clients and $\calA$ be an $\alpha$-approximate solution to $(k,z)$-clustering on $G$ so that:
\begin{itemize}
\item
For every cluster $\widetilde{C}$ induced by $\calA$ on $G$, all points of $\widetilde{C}$ contribute the same cost in $\calA$ up to a factor of $2$.
\item
For all clusters $\widetilde{C}$ induced by $\calA$ on $G$, we have that $\frac{\Cost(G,\calA)}{2k}\le\Cost(\widetilde{C},\calA)$. 
\end{itemize}
Let $\mathbb{C}$ be an $\calA$-approximate centroid set for $(k,z)$-clustering on $G$. 

\noindent
Then there exists a procedure $\GroupSample$ that constructs a set $\Omega$ of size 
\[\Lambda_1=\O{\frac{\max(\alpha^2,\alpha^z)\log^2\frac{1}{\eps}}{2^{\O{z\log z}}\min(\eps^2,\eps^z)}\left(k\log|\mathbb{C}|+\log\log\frac{1}{\eps}+\log n\right)},\]
such that with high probability, it simultaneously holds for all sets $S$ of $k$ centers that
\[|\Cost(G,S)-\Cost(\Omega,S)|\le\O{\frac{\eps}{\alpha}}(\Cost(G,S)+\Cost(G,\calA).\]
\end{lemma}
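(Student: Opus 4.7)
The proof follows the standard importance-sampling template: for a \emph{fixed} query $S$, show that $\Cost(\Omega,S)$ is unbiased and concentrates around $\Cost(G,S)$ via Bernstein's inequality, then upgrade to a uniform bound over all $S$ by a union bound against a net built from the $\calA$-approximate centroid set $\mathbb{C}$. The two structural hypotheses on $G$ (cluster-cost uniformity up to a factor of $2$, and each cluster carrying at least a $\frac{1}{2k}$-fraction of $\Cost(G,\calA)$) are what make the variance of the estimator small enough for such a bound to go through.

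First I would analyze a single round. For $p \in \widetilde{C_i}$, the sampling probability is $q_p = \frac{\Cost(\widetilde{C_i},\calA)}{|\widetilde{C_i}|\,\Cost(G,\calA)}$, and the per-round contribution $Y_j$ satisfies $\Ex{\Cost(\Omega,S)} = \Cost(G,S)$. A generalized triangle inequality (e.g.\ Lemma 5.1 of \cite{MakarychevMR19}-style) gives $\Cost(p,S) \le 2^{\O{z}}(\Cost(p,\calA) + \Cost(c_{\pi(p)},S))$, where $c_{\pi(p)}$ is the center of $\calA$ assigned to $p$. Using hypothesis (i) I can replace $\Cost(p,\calA)$ by $\Cost(\widetilde{C_i},\calA)/|\widetilde{C_i}|$ up to a constant, and using hypothesis (ii) I can bound $1/q_p \le 2k \cdot |\widetilde{C_i}|$. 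Summing over $G$, the variance collapses to
\[
\sum_{j}\Ex{Y_j^{2}} \;\le\; \frac{2^{\O{z\log z}}\,\max(\alpha^{2},\alpha^{z})}{\Lambda_1}\,(\Cost(G,S) + \Cost(G,\calA))^{2},
\]
and the deterministic range bound on each $Y_j$ admits an analogous estimate. Bernstein's inequality (\thmref{thm:bernstein:conc}) then yields, for a single $S$, the $\O{\eps/\alpha}$ deviation with failure probability $\exp(-\Lambda_1 \cdot \min(\eps^2,\eps^z) / (2^{\O{z\log z}}\max(\alpha^2,\alpha^z)))$.

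Next I would convert this per-$S$ bound into a uniform bound. By definition of an $\calA$-approximate centroid set, every $S \subseteq \mathbb{R}^{d}$ is matched by some $\widetilde{S}\subseteq\mathbb{C}$ that agrees with $S$ pointwise up to an additive $\frac{\eps}{z\log(z/\eps)}(\Cost(p,S)-\Cost(p,\calA))$ error. Thus it suffices to union-bound over the at most $|\mathbb{C}|^{k}$ candidate sets drawn from $\mathbb{C}$. For this union bound to close, the per-set failure probability must be driven below $|\mathbb{C}|^{-k}\cdot n^{-\Omega(1)}$, which is exactly why the linear-in-$(k\log|\mathbb{C}| + \log\log\frac{1}{\eps} + \log n)$ factor appears inside $\Lambda_1$. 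A chaining argument that buckets points according to the ratio $\Cost(p,S)/\Cost(p,\calA)$ into $\O{\log\frac{1}{\eps}}$ scales contributes the additional $\log^{2}\frac{1}{\eps}$ factor, and finally converts the centroid-set error back into the target $\O{\eps/\alpha}\,(\Cost(G,S)+\Cost(G,\calA))$ slack.

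\textbf{Main obstacle.} The tight spot is the variance bound: naive importance sampling assigns variance $\Cost(p,S)^{2}/q_p$ to each point $p$, and without control on $q_p$ this can be catastrophic. The only reason one can replace $1/q_p$ by the manageable quantity $|\widetilde{C_i}|/\Cost(\widetilde{C_i},\calA)\cdot \Cost(G,\calA)$, and then telescope the sum into $\Cost(G,\calA)\cdot \sum_i \Cost(\widetilde{C_i},\calA) = \Cost(G,\calA)^{2}$, is hypothesis (i), which says $\Cost(p,\calA) \asymp \Cost(\widetilde{C_i},\calA)/|\widetilde{C_i}|$. Hypothesis (ii) simultaneously prevents a single tiny-cost cluster from dominating the Bernstein range term. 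All of the work in \cite{Cohen-AddadSS21} goes into showing that the preceding ring/group decomposition enforces these two properties on each $G$ so that the present lemma can be applied as a black box; once that reduction is done, the proof above is the routine importance-sampling calculation combined with the $\mathbb{C}$-based net argument.
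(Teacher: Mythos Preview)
Your proposal has the right architecture (unbiased importance-sampling estimator, Bernstein concentration, then a union bound over the $\calA$-approximate centroid set $\mathbb{C}$), and your reading of how the two structural hypotheses are used is on target. However, the paper's outline of the \cite{Cohen-AddadSS21} proof organizes the concentration step differently, and the difference is not cosmetic.

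The paper does \emph{not} apply a single Bernstein bound to all of $G$. Instead, for each fixed $S$ it first partitions $G$ into three categories---\emph{tiny}, \emph{interesting}, and \emph{huge}---according to the magnitude of $\Cost(p,S)$, and then applies Bernstein separately to the tiny and interesting points (with a separate Chernoff bound showing that the \emph{number} of samples landing in each cluster $\widetilde{C_i}$ concentrates around its mean $\Lambda_1\Cost(\widetilde{C_i},\calA)/\Cost(G,\calA)\ge\Lambda_1/(2k)$; this is where hypothesis~(ii) is actually cashed in). The huge points are handled by a different argument. Your displayed variance bound
\[
\sum_j \Ex{Y_j^2}\le \frac{2^{\O{z\log z}}\max(\alpha^2,\alpha^z)}{\Lambda_1}\,(\Cost(G,S)+\Cost(G,\calA))^2
\]
is precisely what fails without this partition: for a huge point $p$ (one with $\Cost(p,S)\gg\Cost(p,\calA)$) the term $\Cost(p,S)^2/q_p$ is not controlled by $\Cost(G,S)+\Cost(G,\calA)$, because $q_p$ is set proportional to $\Cost(p,\calA)$, not $\Cost(p,S)$. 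The generalized triangle inequality you invoke only pushes the problem to $\Cost(c_{\pi(p)},S)$, which can still be arbitrarily large relative to the group cost.

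You do mention a bucketing of points by the ratio $\Cost(p,S)/\Cost(p,\calA)$, but you place it \emph{after} Bernstein, as part of the net/chaining argument. In the actual proof this partition sits \emph{before} (and inside) the concentration step: it is what makes the per-category variance and range bounds finite, and the huge category is dealt with outside of Bernstein altogether (this is also why the $\calA$-approximate centroid set definition only demands accuracy for points with $\Cost(p,S)\le(8z/\eps)^z\Cost(p,\calA)$). So your sketch has the right ingredients but in the wrong order; as written, the single-Bernstein step would not close.
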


We outline the high-level approach of the proof of \lemref{lem:css:lemtwo} and how can it can adjusted for an $(\alpha,\beta)$-approximate solution $\calA$, as well as a process that samples each point independently without replacement, rather than using $\Lambda_1$ rounds as $\GroupSample$. 

The proof of \lemref{lem:css:lemtwo} involves further partitioning the points of $G$ into three subsets, based on the cost induced by the point. 
Namely, given a set $S$ of $k$ centers, a point $p$ in group $G$ is categorized as tiny, interesting, or huge, depending on $\Cost(p,S)$ (though the interesting and huge points actually have a small overlap to allow slack in the proof).  
\cite{Cohen-AddadSS21} applies standard Chernoff bounds to show that the number of sampled points is well-concentrated around its expectation and then applies Bernstein's inequality to show that the clustering costs of the tiny points, the interesting points are well-concentrated around their expectations. 
In particular, they show that the expected number of sampled points from each cluster $\widetilde{C_i}$ is 
\[\frac{\Lambda_1\Cost(\widetilde{C_i},\calA)}{\Cost(G,\calA)}\ge\frac{\Lambda_1}{2k},\]
due to the assumption that for all clusters $\widetilde{C}$ induced by $\calA$ on $G$, we have that $\frac{\Cost(G,\calA)}{2k}\le\Cost(\widetilde{C},\calA)$. 

We first remark that if $\calA$ is an $(\alpha,\beta)$-approximate solution rather than an $\alpha$-approximate solution, i.e., if $\calA$ has $\beta k$ centers rather than $k$ centers, then the definition of the rings and groups would instead insist that for all clusters $\widetilde{C}$ induced by $\calA$ on $G$, we have that $\frac{\Cost(G,\calA)}{2\beta k}\le\Cost(\widetilde{C},\calA)$.  
Then by oversampling $\Lambda_1$ by a factor of $\beta$, i.e., sampling $\beta\Lambda_1$ points would ensure that the expected number of sampled points from each cluster $\widetilde{C_i}$ would be
\[\frac{\beta\Lambda_1\Cost(\widetilde{C_i},\calA)}{\Cost(G,\calA)}\ge\frac{\beta\Lambda_1}{2\beta k}=\frac{\Lambda_1}{k}.\]
It then remains to argue the correctness of sampling each point independently without replacement rather than a fixed $\beta\Lambda_1$ number of points, which simply holds by adjusting the applications of the Chernoff bounds and Bernstein's inequality so that there is a separate random variable for each point in the input rather than for each of the $\Lambda_1$ rounds.  

\subsection{Adaptation of Sensitivity Sampling}
The $\SensitivitySample$ procedure of \cite{Cohen-AddadSS21} samples a fixed $\Lambda_2$ number of points from each group $G$ with probability proportional to the contribution of the point. 
Specifically, $\SensitivitySample$ then performs $\Lambda_2$ rounds of sampling, where each round samples a point $p$ in the group $G$ with probability proportional to $\frac{\Cost(p,\calA)}{\Cost(G,\calA)}$ and rescales the sampled point appropriately. 
Then $\SensitivitySample$ offers the following guarantee:

\begin{lemma}[Lemma 3 of \cite{Cohen-AddadSS21}]
\lemlab{lem:css:lemthree}
Let $(X,\dist)$ be a metric space, $k$, $z$ be positive integers, and $\calA$ be an $\alpha$-approximate solution to $(k,z)$-clustering on $G$. 
Let $\mathbb{C}$ be an $\calA$-approximate centroid set for $(k,z)$-clustering on $G$. 
Let $G$ be either a group $G_b^O$ or $G_{\max}^O$. 
\noindent
Then there exists a procedure $\SensitivitySample$ that constructs a set $\Omega$ of size 
\[\Lambda_2=\O{\frac{2^{\O{z\log z}}\alpha^2\log^2\frac{1}{\eps}}{\eps^2}\left(k\log|\mathbb{C}|+\log\log\frac{1}{\eps}+\log n\right)},\]
such that with high probability, it simultaneously holds for all sets $S$ of $k$ centers that
\[|\Cost(G,S)-\Cost(\Omega,S)|\le\O{\frac{\eps}{\alpha z\log\frac{z}{\eps}}}(\Cost(G,S)+\Cost(G,\calA).\]
\end{lemma}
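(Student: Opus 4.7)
\begin{proofof}{\lemref{lem:css:lemthree} (adaptation)}
The plan is to follow the original proof of Lemma~3 of \cite{Cohen-AddadSS21}, which proves the statement for the procedure that performs $\Lambda_2$ independent with-replacement rounds, each drawing a point $p\in G$ with probability $\frac{\Cost(p,\calA)}{\Cost(G,\calA)}$ and rescaling by the reciprocal of that probability. My goal is to re-index the random variables so that the argument instead runs over per-point indicators of independent Bernoulli samples with inclusion probability $p_p\ge\min\bigl(\Lambda_2\,\frac{\Cost(p,\calA)}{\Cost(G,\calA)},1\bigr)$, which is exactly the scheme invoked by \thmref{thm:kmeans:offline:coreset} through \lemref{lem:alg:samp:probs}.

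The key steps are as follows. First, fix any set $S$ of $k$ centers from the $\calA$-approximate centroid set $\mathbb{C}$. For each point $p\in G$, let $Y_p$ be the indicator that $p$ is sampled, and let $Z_p=\frac{Y_p}{p_p}\Cost(p,S)$, so that $\sum_p Z_p=\Cost(\Omega,S)$ and $\Ex{\sum_p Z_p}=\Cost(G,S)$. Second, I would partition $G$ into the familiar tiny/interesting/huge buckets from \cite{Cohen-AddadSS21} based on the ratio $\Cost(p,S)/\Cost(p,\calA)$, where the threshold is $\O{(z/\eps)^z}$, so that on the huge and tiny buckets deterministic bounds suffice and on the interesting bucket Bernstein's inequality applies. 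Third, for the interesting bucket I would apply \thmref{thm:bernstein:conc} directly with the $Z_p$'s, noting that $Z_p\le\frac{\Cost(p,S)}{p_p}\le\frac{\Cost(G,\calA)\Cost(p,S)}{\Lambda_2\Cost(p,\calA)}$ gives the deviation cap $\Delta$ used in the original proof, and $\sum_p \Ex{Z_p^2}\le\sum_p\frac{\Cost(p,S)^2}{p_p}$ gives exactly the same variance proxy (up to a factor of at most one). Fourth, I would take a union bound over $|\mathbb{C}|^k$ candidate solutions, which fixes the $k\log|\mathbb{C}|$ dependence in $\Lambda_2$, and then extend from $\calS$ to arbitrary $k$-tuples by the centroid-set property.

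The main technical point, and the step I expect to require the most care, is the variance comparison: in the with-replacement scheme the second moment naturally telescopes into $\Lambda_2$ i.i.d. contributions, whereas here each point contributes a single Bernoulli and possibly with $p_p=1$ (deterministic inclusion). The observation making this go through is that independent-without-replacement sampling stochastically dominates with-replacement sampling in the relevant sense: the variance of $\sum_p Z_p$ is bounded above by what the original proof obtains, since (i) on points with $p_p=1$ the contribution is deterministic with zero variance, and (ii) on the remaining points $\textup{Var}(Z_p)=\frac{(1-p_p)}{p_p}\Cost(p,S)^2\le\frac{\Cost(p,S)^2}{p_p}$, which matches the per-round variance in the with-replacement analysis. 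Thus every concentration inequality that the original proof invokes goes through with the same constants, and since $p_p$ may be strictly larger than $\Lambda_2\,\frac{\Cost(p,\calA)}{\Cost(G,\calA)}$, the bound only improves.

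Finally, because the two buckets corresponding to groups $G_b^O$ and $G_{\max}^O$ are handled identically in the original proof once the above concentration is in place, the final additive error bound $\O{\eps/(\alpha z\log(z/\eps))}(\Cost(G,S)+\Cost(G,\calA))$ is preserved without modification. Combining this adaptation of $\SensitivitySample$ with the adaptation of $\GroupSample$ described in the previous subsection yields the statement of \thmref{thm:kmeans:offline:coreset} under independent without-replacement sampling with probabilities $p_x\ge\min(\zeta_x+\eta_x,1)$.
\end{proofof}
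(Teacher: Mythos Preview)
Your variance-comparison argument---that replacing $\Lambda_2$ i.i.d.\ with-replacement draws by independent Bernoulli indicators with $p_p\ge\min\bigl(\Lambda_2\,\frac{\Cost(p,\calA)}{\Cost(G,\calA)},1\bigr)$ can only shrink variances and deviation caps, so every Bernstein application carries over---is correct, and is essentially what the paper records for the adaptation.

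However, the proof outline you give for \lemref{lem:css:lemthree} is not the proof of Lemma~3 of \cite{Cohen-AddadSS21}; it is the proof of Lemma~2 (the $\GroupSample$ lemma). The tiny/interesting/huge trichotomy with threshold $\O{(z/\eps)^z}$, together with your claim that ``on the huge and tiny buckets deterministic bounds suffice,'' relies on the structural guarantee of the main-ring groups that all points of a cluster $\widetilde C$ contribute the same cost to $\calA$ up to a factor of $2$. That guarantee is precisely what fails for the outer groups $G_b^O$ and $G_{\max}^O$, which is why \cite{Cohen-AddadSS21} treats them with a separate lemma and a different decomposition.

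The actual proof of Lemma~3 partitions $G$ into only two classes relative to a fixed $S$: the \emph{close} points with $\Cost(p,S)\le 4^z\Cost(p,\calA)$ and the \emph{far} points with $\Cost(p,S)>4^z\Cost(p,\calA)$. Bernstein on the close points goes through exactly as you describe. The far points, however, are not handled by a deterministic bound; one first uses Bernstein to show that their contribution can be replaced by $\dist(p,c)^z$ plus $\dist(c,S)^z$ for the nearest $c\in\calA$, and then the second term is charged to the remaining (unsampled) points of the same cluster in the full dataset, exploiting that those points are far more numerous and already pay a comparable cost to $S$. This charging step, and the associated second-moment bound $\frac{\Cost(G,\calA)}{\Lambda_2^2}\Cost(C\cap G,\calA)\le\frac{2k}{\Lambda_2^2}\Cost(C\cap G,\calA)^2$, is what you are missing; without it the far points cannot be controlled for outer groups. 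Once you insert this step, your per-point Bernoulli re-indexing applies to both Bernstein invocations exactly as you argue, and the adaptation is complete.
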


We outline the high-level approach of the proof of \lemref{lem:css:lemthree} and how can it can adjusted for an $(\alpha,\beta)$-approximate solution $\calA$, as well as a process that samples each point independently without replacement, rather than using $\Lambda_2$ rounds as $\SensitivitySample$. 

The proof of \lemref{lem:css:lemthree} partitions the points of $G$ into two categories, based on the cost induced by the point. 
Given a set $S$ of $k$ centers, the close points are the points $p$ in $G$ that have $\Cost(p,S)\le4^z\cdot\Cost(p,\calA)$. 
The far points are the remaining points in $G$, i.e., the points $p$ in $G$ with $\Cost(p,S)>4^z\cdot\Cost(p,\calA)$. 

\cite{Cohen-AddadSS21} applies Bernstein's inequality to show that the clustering cost of the close points is well-concentrated around their expectations. 
We can again adjust the application of Bernstein's inequality so that there is a separate random variable for each point in the input rather than for each of the $\Lambda_2$ samples. 

To handle the far points, \cite{Cohen-AddadSS21} again uses Bernstein's inequality to show that with high probability, the clustering points of these points with respect to $S$ can be replaced with the distance to the closest center $c\in\calA$ plus the distance from $c$ to the closest center in $S$. 
Conditioned on this event, the latter distance can then be charged to the remaining points of the cluster from the original dataset, i.e., the remaining points of the cluster not necessarily restricted to group $G$, which are significantly more numerous and already paying a similar value in $S$. 
In particular, Bernstein's inequality utilizes the fact that the second moment of the estimated cost of a cluster $C$ is at most 
\[\frac{\Cost(G,\calA)}{\Lambda_2^2}\Cost(C\cap G,\calA)\le\frac{2k}{\Lambda_2^2}(\Cost(C\cap G,\calA))^2,\]
for $\beta=1$. 
Thus for general $\beta$, we recover the same guarantee by oversampling $\Lambda_2$ by a factor of $\beta$, i.e., sampling $\beta\Lambda_2$ points would ensure that the second moment would be at most $\frac{2k}{\Lambda_2^2}\Cost^2(C\cap G,\calA)$. 
It then remains to argue the correctness of sampling each point independently without replacement rather than a fixed $\beta\Lambda_2$ number of points, which again holds by adjusting the application of Bernstein's inequality so that there is a separate random variable for each point in the input rather than for each of the $\Lambda_2$ rounds.

\section{Experimental Evaluations}
In this section, we conduct simple empirical demonstrations as proof-of-concepts to illustrate the benefits of our algorithm. 
Our empirical evaluations were conducted using Python 3.10 using a 64-bit operating system on an AMD Ryzen 7 5700U CPU, with 8GB RAM and 8 cores with base clock 1.80 GHz. 
The general approach to our experiments is to produce a data stream $S$ that defines dataset $X$. 
We describe the generation process in more detail below for each individual dataset. 
We then compare the performance of a simplified version of our algorithm with various state-of-the-art baselines. 

\paragraph{Baselines.} 
Our first baseline is the simple Lloyd's algorithm on the entire dataset $X$, with multiple iterations using the k-means++ initialization. 
This is a standard approach for finding a good approximation to the optimal clustering cost, because finding the true optimal centers requires exponential time. 
Because this offline Lloyd's algorithm has access to the entire dataset, the expected behavior is that this algorithm will have the best objective, i.e., smallest clustering cost. 
However, we emphasize that this algorithm requires storing the entire dataset $X$ in memory and thus its input size is significantly larger than the sublinear space algorithms. 
We use denoted \textsf{off} to denote this offline algorithm. 

To compare with the offline Lloyd's algorithm, we run a number of sublinear space algorithms. 
These algorithms generally perform some sort of processing on the datastream $X$ to create a coreset $C$. 
We normalize the space requirement of these algorithms by permitting each algorithm to store $m$ points across specific ranges of $m$. 
We then run Lloyd's algorithm on the coreset $C$, with the same number of iterations using the k-means++ initialization. 

Our first sublinear space algorithm is uniform sampling on the dataset $X$. 
That is, we form $C$ by uniformly sampling $m$ points from $X$, before running Lloyd's algorithm. 
We use \textsf{uni} to denote this algorithm whose first step is based on uniformly sampling. 

Our second sublinear space algorithm is the importance sampling approach used by histogram-based algorithms, e.g., \cite{BravermanLLM15,BravermanFLSY17,BorassiELVZ20}. 
These algorithms perform importance sampling, i.e., sample points into the coreset $C$ with probability proportional to their distances from existing samples and delete points once the clustering cost of $C$ is much higher than the clustering cost of the dataset $X$. 
We use \textsf{hist(ogram)} to denote this algorithm that is based on the histogram frameworks for sliding windows. 

Our final algorithm is a simplification of our algorithm. 
As with the histogram-based algorithm, we perform importance sampling on the stream $S$ to create the coreset $C$ of size $m$. 
Thus we do not implement the ring and group sampling subroutines in our full algorithm. 
However, the crucial difference compared to the histogram-based approach is that we forcefully discard points of $C$ that have expired. 
We use \textsf{imp} to denote this algorithm whose first step is based on importance sampling. 

\subsection{Synthetic Noise on Real-World Datasets}
We first describe the methodology and experimental setup of our empirical evaluation on a real-world dataset with an amount of synthetic noise before detailing the experimental results. 

\paragraph{Dataset.} 
The first component of our dataset consists of the points of the SKIN (Skin Segmentation) dataset $X'$ from the publicly available UCI repository~\cite{BhattD}, which was also used in the experiments of \cite{BorassiELVZ20}. 
The dataset $X'$ consists of $245,057$ points with four features, where each point refers to a separate image, such that the first three features are constructed over BGR space, and the fourth feature is the label for whether or not the image refers to a skin sample. 
We subsequently pre-process each dataset to have zero mean and unit standard deviation in each dimension. 

We then form our dataset $X$ by augmenting $X'$ with $201$ points in four-dimensional space, where $100$ of these points were drawn from a spherical Gaussian with unit standard deviation in each direction and centered at $(-10,10,0,0)$ and $100$ of these points were drawn from a spherical Gaussian with unit standard deviation in each direction and centered at $(10,-10,0,0)$. 
The final point of $X$ was drawn from a spherical Gaussian with unit standard deviation centered at $(500,500,0,0)$. 
Thus our dataset $X$ has dimensions $n=245,258$ and $d=4$.  
We then create the data stream $S$ by prepending two additional points drawn from spherical Gaussians with standard deviation $2.75$ centered at $(-10,10,0,0)$ and $(-10,-10,0,0)$ respectively, so that the stream has length $245,260$. 
We set the window length to be $245,258$ in accordance with the ``true'' data set, so that the first two points of the stream will be expired by the data stream. 

\paragraph{Experimental setup.} 
For each of the instances of Lloyd's algorithm, either on the entire dataset $X$ or the sampled coreset $C$, we use 10 iterations using the k-means++ initialization. 
While the offline Lloyd's algorithm stores the entire dataset $X$ of $245,258$ points in memory, we only allow each of the sublinear-space algorithms to store a fixed $m$ points. 
We compare the algorithms across $m\in\{5,10,15,20,25,30\}$ and $k\in\{2,3,4,5,6,7,8,9,10\}$. 
Note that in the original dataset, each of the points has a label for either skin or non-skin, which would be reasonable for $k=2$. 
However, due to the artificial structure possibly induced by the synthetic noise, it also makes sense to other values of $k$.
In particular, preliminary experiments from uniform sampling by the elbow method indicated that $k=3$ would be a reasonable setting. 
Thus we fix $k=3$ while varying $m\in\{5,10,15,20,25,30\}$ and we arbitrarily fix $m=25$ while varying $k\in\{2,3,4,5,6,7,8,9,10\}$. 

\paragraph{Experimental results.}
For each choice of $m$ and $k$, we ran each algorithm $30$ times and tracked the resulting clustering cost. 
Our algorithm demonstrated superior performance than the other sublinear-space algorithms across all values of $m\in\{5,10,15,20,25,30\}$ and $k\in\{2,3,4,5,6,7,8,9,10\}$, and was even quite competitive with the offline Lloyd's algorithm, even though our algorithm only used memory size $m\le 30$, while the offline algorithm used memory $245,258$. 

Uniform sampling performed well for $k=2$, which in some case captures the structure imposed on the data through the skin vs. non-skin label, but for larger $k$, the optimal solutions start placing centers to handle the synthetic noise, which may not be sampled by uniform sampling. 
Thus uniform sampling performed relatively poorly albeit quite stably for larger $k$. 
In contrast, the histogram-based algorithm performed poorly for small $k$ across all our ranges of $m$, due to sampling the extra points in $S\setminus X$, so that the resulting Lloyd's algorithm on $C$ moved the centers far away from the optimal centers of $X$. 
On the other hand, the histogram-based algorithm performed well for larger $k$, likely due to additional centers that could be afforded to handle the points in $S\setminus X$. 
We plot our results in \figref{fig:skin:errs}. 

\begin{figure}[!htb]
\centering
\begin{subfigure}[b]{0.49\textwidth}
\includegraphics[scale=0.54]{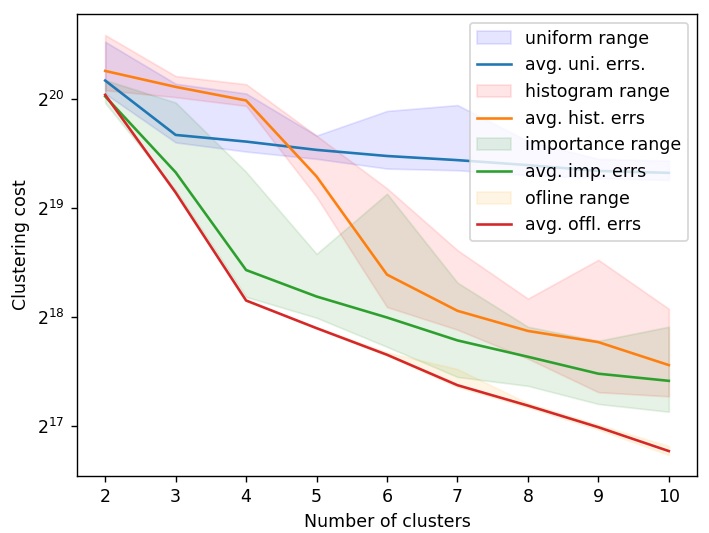}
\caption{Comparisons for varying $k$.}
\label{fig:error:skin:a}
\end{subfigure}
\begin{subfigure}[b]{0.49\textwidth}
\includegraphics[scale=0.52]{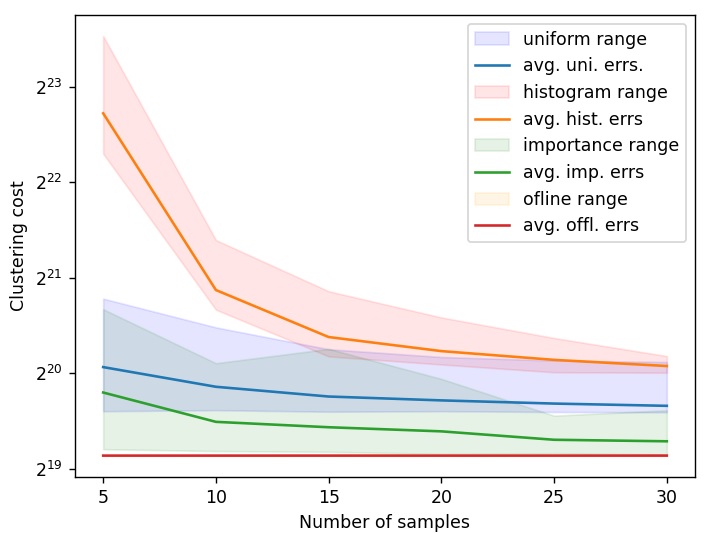}
\caption{Comparisons for varying $m$.}
\label{fig:error:skin:b}
\end{subfigure}
\caption{Comparison of average clustering costs made by uniform sampling, histogram-based algorithm, and our coreset-based algorithm across various settings of space allocated to the algorithm, given a synthetic dataset. For comparison, we also include the offline k-means++ algorithm as a baseline, though it is inefficient because it stores the entire dataset. }
\figlab{fig:skin:errs}
\end{figure}

\subsection{Synthetic Data}
We first describe the methodology and experimental setup of our empirical evaluation on a synthetic dataset before detailing the experimental results. 
To emphasize the benefits of our algorithm against worst-case input, we generate a synthetic dataset that would fully capture the failure cases of previous baselines. 

\paragraph{Dataset.}
We generated our dataset $X$ consisting of $200,001$ points on two-dimensional space so that $100,000$ points were drawn from a spherical Gaussian with standard deviation $2.75$ centered at $(-10,10)$ and $100,000$ points were drawn from a spherical Gaussian with standard deviation $2.75$ centered at $(10,-10)$. 
The final point of $X$ was drawn from a spherical Gaussian with standard deviation $2.75$ centered at $(100000,100000)$. 
Thus by construction of our synthetic dataset for $k=3$, the optimal centers should be close to $(-10,10)$, $(10,-10)$, and $(100000,100000)$. 
We then create the data stream $S$ by prepending two additional points drawn from spherical Gaussians with standard deviation $2.75$ centered at $(-100000,100000)$ and $(-100000,-100000)$ respectively. 
We set the window length to be $200,001$ in accordance with the ``true'' data set, so that the first two points of the stream of length $200,003$ will be expired by the data stream. 

\paragraph{Experimental setup.} 
For each of the instances of Lloyd's algorithm, either on the entire dataset $X$ or the sampled coreset $C$, we use 3 iterations using the k-means++ initialization. 
In this case, the offline Lloyd's algorithm requires storing the entire dataset $X$ in memory and thus its input size is $200,001$ points. 
By comparison, we normalize the space requirement of the sublinear-space algorithms by permitting each algorithm to store $m\in\{3,4,5,6,7,8,9,10,11,12\}$ points. 
Note that since $k=3$, it would not be reasonable for $C$ to have fewer than $3$ points. 
We then run Lloyd's algorithm on the coreset $C$, with 3 iterations using the k-means++ initialization. 

By construction of our dataset, we generally expect the uniform sampling algorithm \textsf{uni} to be stable across the various values of $m$ but perform somewhat poorly, as it will sample points from the large clusters but it will miss the point generated from the Gaussian centered at $(100000,100000)$. 
Since in our construction the stream $S$ only contains two more points than the dataset $X$, the histogram-based algorithm \textsf{hist} will not delete any points. 
Thus, the resulting coreset $C$ generated by \textsf{hist} is somewhat likely contain the points generated from the Gaussians centered at $(-100000,100000)$ and $(-100000,-100000)$ and can perform poorly on the synthetic dataset in these cases. 
Finally, since we allow the last point of the stream to be the single point of $X$ far from the two large clusters, then the importance sampling based algorithm \textsf{imp} will sample the last point with high probability once any points of $C$ have been expired. 
Hence by the construction of our stream, we expect \textsf{imp} to perform well. 

\begin{figure}[!htb]
\centering
\begin{subfigure}[b]{0.49\textwidth}
\includegraphics[scale=0.54]{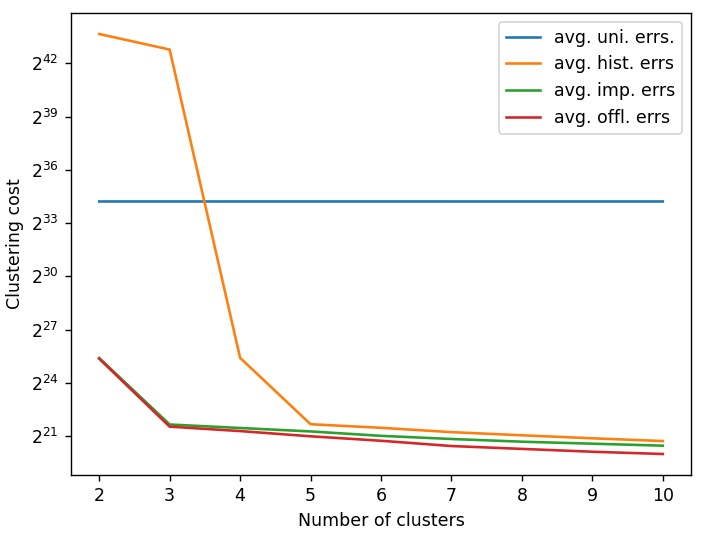}
\caption{Comparisons for varying $k$.}
\figlab{fig:synth-k}
\end{subfigure}
\begin{subfigure}[b]{0.49\textwidth}
\includegraphics[scale=0.52]{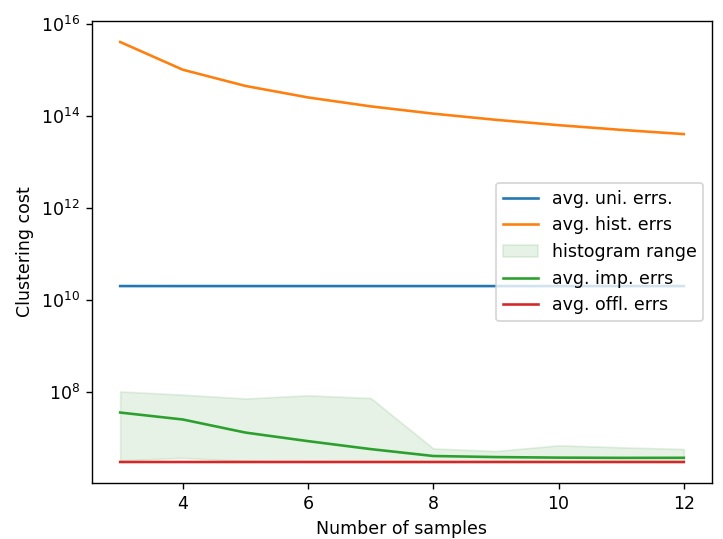}
\caption{Comparisons for varying $m$.}
\figlab{fig:synth-err}
\end{subfigure}
\caption{Comparison of average clustering costs made by uniform sampling, histogram-based algorithm, and our coreset-based algorithm across various settings of space allocated to the algorithm, given a synthetic dataset. For comparison, we also include the offline k-means++ algorithm as a baseline, though it is inefficient because it stores the entire dataset. Ranges are not plotted because they would not be visible.}
\figlab{fig:synth:errs}
\end{figure}

\paragraph{Experimental results.}
For each choice of $m$ and $k$, we ran each algorithm $50$ times and tracked the resulting clustering cost. 
As expected by our construction, our algorithm performed significantly better than the other sublinear-space algorithms. 
In fact, even though our algorithm was only permitted memory size $m\in\{3,4,5,6,7,8,9,10,11,12\}$, our algorithm was quite competitive with the offline Lloyd's algorithm, which used memory size $200,001$, i.e., the entire dataset. 
For $k\ge 3$, uniform sampling performed relatively poorly but quite stably, because although it never managed to sample the point generated from the Gaussian centered at $(100000,100000)$, the two other Gaussian distributions were sufficiently close that any sampled point would serve as a relatively good center for points generated from the two distributions. 
Similarly, for fixed $k=3$ in \figref{fig:synth-err}, the importance sampling approach used by histogram-based algorithms performed the worse, by multiple orders of magnitude. 
We expect this is because we did not delete the points in $S\setminus X$ from $C$ and thus the resulting Lloyd's algorithm on $C$ moved the centers far away from the centers of the Gaussian distributions that induced $X$. 
A more optimized fine-tuned histogram-based algorithm would have searched for parameters that govern when to delete points from $S\setminus X$, which have reduced the algorithm down to our main algorithm. 
We plot our results in \figref{fig:synth:errs}.
\end{document}